\let\MYoriglatexcaption\caption
\renewcommand{\caption}[2][\relax]{\MYoriglatexcaption[#2]{#2}}
\newtheorem{thm}{Theorem}[section]
\newtheorem{lem}[thm]{Lemma}
\newtheorem{cor}[thm]{Corollary}
\newtheorem{conj}[thm]{Conjecture}
\newcommand{\MH}{\operatorname{MH}}
\newcommand{\cuttable}[2][]{
    \ifthenelse{\equal{#1}{}}%
		{}%
		{#1}%
}
\newcommand{\newparentheses}[3]{%
  \expandafter\newcommand\csname #1\endcsname[1]{#2##1#3}%
  \expandafter\newcommand\csname #1L\endcsname[1]{\bigl#2##1\bigr#3}%
  \expandafter\newcommand\csname #1XL\endcsname[1]{\Bigl#2##1\Bigr#3}%
  \expandafter\newcommand\csname #1XXL\endcsname[1]{\biggl#2##1\biggr#3}%
  \expandafter\newcommand\csname #1V\endcsname[1]{\left#2##1\right#3}}
\newcommand{\onenewattribute}[3]{%
  \@ifundefined{#1}{\let\@@def\newcommand}{\let\@@def\renewcommand}%
  \expandafter\@@def\csname #1\endcsname[2][]{%
    \ifthenelse{\equal{##1}{}}%
    {#2\csname #3\endcsname{##2}}%
    {#2_{##1}\csname #3\endcsname{##2}}}}
\newcommand{\newattribute}[2]{%
  \onenewattribute{#1}{#2}{parens}%
  \onenewattribute{#1L}{#2}{parensL}%
  \onenewattribute{#1XL}{#2}{parensXL}%
  \onenewattribute{#1V}{#2}{parensV}}
\newcommand{\curvature}[2][]{%
    \ifthenelse{\equal{#1}{}}%
		{\kappa(#2)}%
		{\kappa_{#1}(#2)}%
}
\newcommand{\ric}[1]{
	\operatorname{ric}(#1)%
}
\setlist{noitemsep}
\newcommand{\overlap}{%
	\gamma
}
\begin{document}

\title{Ricci-Ollivier Curvature of the Rooted Phylogenetic Subtree-Prune-Regraft Graph\thanks{This work was funded by National Science Foundation award 1223057. Chris Whidden is a Simons Foundation Fellow of the Life Sciences Research Foundation.}}

\author{Chris Whidden\thanks{Program in Computational Biology, Fred Hutchinson Cancer Research Center, Seattle, WA, USA 98109. \set{cwhidden,matsen}@fredhutch.org} \\
\and
Frederick A Matsen IV\footnotemark[2]}
\date{}

\maketitle

\begin{abstract}
Statistical phylogenetic inference methods use tree rearrangement operations such as subtree-prune-regraft (SPR) to perform Markov chain Monte Carlo (MCMC) across tree topologies.
These methods are known to mix quickly when sampling from the simple uniform distribution of trees but may become stuck in the local optima of multi-modal posterior distributions for real data induced by non-uniform likelihoods.
The structure of the graph induced by tree rearrangement operations is an important determinant of the mixing properties of MCMC, motivating study of the underlying \emph{rSPR graph} in greater~detail.

In this paper, we investigate the rSPR graph in a new way: by calculating Ricci-Ollivier curvature with respect to uniform and Metropolis-Hastings random walks.
We confirm using simulation that mean access time distributions depend on distance, degree, and curvature, showing the relevance of these curvature results to stochastic tree search.
These calculations require fast new algorithms for constructing and sampling these graphs, reducing the time required to compute an rSPR graph from $O(m^2n)$-time to $O(mn^3)$, where $m$ is the (often large) number of trees in the graph and $n$ their number of leaves, and reducing the time required to select an SPR neighbor of a tree uniformly at random to $O(n)$ time.
We then develop a closed form solution to characterize how the number of SPR neighbors of a tree changes after an SPR operation is applied to that tree.
This gives bounds on the curvature, as well as a flatness-in-the-limit theorem indicating that paths of small topology changes are easy to traverse.
However, we find that large topology changes (i.e. moving a large subtree) gives pairs of trees with negative curvature.
Although these pairs of trees with negative curvature do not impede mixing in this simple well-connected space, they may manifest as bottlenecks in the much smaller credible sets induced by phylogenetic posteriors with a likelihood function.
This work extends our knowledge of the rSPR graph, in particular properties that are relevant for investigation of sampling the rSPR graph.

\end{abstract}

\section{Introduction}
Molecular phylogenetic methods reconstruct evolutionary trees from DNA or RNA data and are of fundamental importance to modern biology.
Statistical phylogenetics is the currently most popular means of reconstructing phylogenetic trees, in which the tree is viewed as an unknown parameter in a likelihood-based statistical inference problem.
The likelihood function in this setting is the likelihood of generating the observed sequences via a continuous time Markov chain (CTMC) evolving down the tree starting from a sequence assumed to be sampled from the stationary distribution~\cite{felsenstein1981evolutionary}.
The lengths of the branches of the phylogenetic tree give the ``time'' parameter in the CTMC, where the generated sequence accrues mutations, typically in an IID manner across sites.
It is now common for researchers to approximate the posterior distribution of trees and their associated parameters in a Bayesian setting using Markov chain Monte Carlo (MCMC).

In order to estimate these distributions accurately, MCMC samplers must sufficiently explore the set of trees.
Phylogenetic search algorithms typically attempt to do so through a combination of modifications to the continuous parameters and tree topology.
Topology changes have been identified as the main limiting factor of Bayesian MCMC algorithms~\cite{lakner2008efficiency,hohna2012guided}, as other parameters cannot be accurately estimated if the topology distribution is not accurately sampled.
Commonly used phylogenetics software packages such as MrBayes~\cite{Ronquist2012-hi} and BEAST~\cite{bouckaert2014beast} rearrange subtrees via subtree-prune-regraft (SPR) moves (Figure~\ref{fig:spr}) or the subset of SPR moves called nearest neighbor interchanges (NNI)~\cite{robinson1971comparison}.
Thus, phylogenetic searches can be viewed as traversing the \emph{SPR graph}: the graph with phylogenetic trees as vertices and SPR adjacencies as edges.

It has become increasingly clear that the structure of the SPR graph plays an important role in determining the accuracy of tree searches.
Researchers have previously identified slow mixing in MCMC with pathological data~\cite{Mossel2005-ly,Mossel2006-fo,Ronquist2006-fv}.
On the other hand, fast mixing has been identified with exceptionally well-behaved data~\cite{Stefankovic2011-hu} or with a uniform distribution~\cite{spade2014note}.
Studies on real data~\cite{beiko2006searching, lakner2008efficiency}, however, have identified posteriors which are difficult to sample using MCMC.
Previously, the lack of sufficient computational tools for examining phylogenetic posteriors in terms of SPR operations made it difficult to determine the cause of these difficulties.
By developing the first such tools, we recently showed that graph structure has a significant effect on MCMC mixing with MrBayes applied to real data~\cite{Whidden2015-yi}, and that multimodal posteriors are common and separated by ``bottlenecks" of specific classes of SPR moves.

Although the SPR graph is thus very important in determining the success of phylogenetic inference procedures, still little is known about the rooted or unrooted versions of the SPR graph itself.
\cite{Song2003-gf}~developed a recursive procedure on a tree to find the degree of the corresponding vertex in the rooted SPR (rSPR) graph, and corresponding bounds on degree.
\cite{Ding2011-bj}~showed that the diameter $\Delta_{\text{rSPR}}$ of the rSPR graph is $n - \Theta(\sqrt n)$, and for the unrooted case they show
\begin{equation}
n - 2\ceil{\sqrt{n}} + 1
\le \Delta_{\text{uSPR}}(n)
\le n - 3 - \floorV{\frac{\sqrt{n - 2} - 1}{2}}.
\end{equation}
We are not aware of any further work investigating properties of the SPR graph, which may be due to its complexity.
Indeed, even computing the distance between topologies in terms of SPR operations (rooted and unrooted) is NP-hard~\cite{bordewich05,hickey2008sdc}.
Fortunately, it is fixed-parameter tractable with respect to the distance in the rooted case~\cite{bordewich05} and efficient fixed-parameter algorithms have recently been developed~\cite{whidden2013hybridization,Whidden2015-yi} and begun to allow such investigation.

Ollivier and colleagues recently pioneered a new approach to calculating Ricci curvature on a general type of metric space, including graphs~\cite{Ollivier2009-bw,Joulin2010-jg}.
In this framework, local information about the metric space is given by a random walk (rather than a Riemann tensor) such that their notion of curvature formalizes the notion of to what extent random walking brings points together.
Applying the framework to Brownian motion on a manifold returns the classical definition of Ricci curvature.
Curvature is determined by the ratio of the earth mover's distance~\cite{rubner2000earth} between neighborhoods of a pair of vertices given by a random walk and the distance between the vertices.
Here the term \emph{random walk} on a space $X$ simply denotes a family of probability measures parameterized by points of $X$ satisfying reasonable assumptions, which includes biased walks such as MCMC.
This approach has been useful for determining properties of a wide variety of graphs including the internet topology~\cite{ni2015ricci} and cancer networks~\cite{sandhu2015graph}.

In this paper, we investigate curvature of the rSPR graph with respect to two random walks and compare those results to access times (i.e. hitting times) for those random walks.
Our explicit focus here is to investigate random walks defined only in terms of the graph itself: the uniform random walk and MCMC sampling from the uniform prior on trees.
In future work, we will extend these methods to study more complicated distributions with non-uniform topology probabilities.

We required several new computational tools.
We present a fast new algorithm for computing rSPR graphs from a set of trees, reducing the time to do so from $\OhOf{m^2n}$ to $\OhOf{mn^3}$ for a set of $m$ trees with $n$ leaves.
As the full rSPR graph on trees with $n$ leaves contains $(2n-3)!! = 3 \cdot 5 \cdot \ldots \cdot (2n-3)$ trees, this is a significant improvement in practice for exploring large subsets of the graph (or, as we do here, the full graph for small numbers of leaves).
By exploiting symmetries in the rSPR graph, we were able to calculate all of the curvatures for pairs of trees with up to seven leaves.
By carefully examining the overlap in rSPR moves, we present a new method for computing the degree of a tree in the rSPR graph that allows one to select an rSPR neighbor uniformly at random in linear-time without explicitly generating the graph.
This stands in contrast to the sampling methods used in current software such as MrBayes, which do not propose SPR moves uniformly.

\begin{figure*}[t]
	\hspace*{\stretch{1}}
	\subfigure[\label{fig:x-tree}]{\includegraphics[scale=1.25]{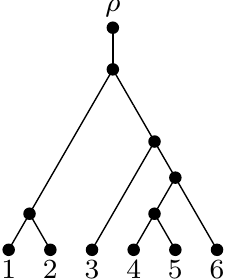}}
	\hspace*{\stretch{2}}
	\subfigure[\label{fig:subtree}]{\includegraphics[scale=1.25]{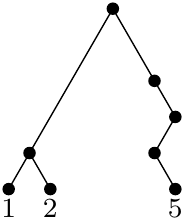}}
	\hspace*{\stretch{2}}
	\subfigure[\label{fig:induced}]{\includegraphics[scale=1.25]{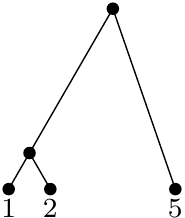}}
	\hspace*{\stretch{1}}
	\\
	\hspace*{\stretch{1}}
	\subfigure[\label{fig:spr}]{\includegraphics[scale=1.25]{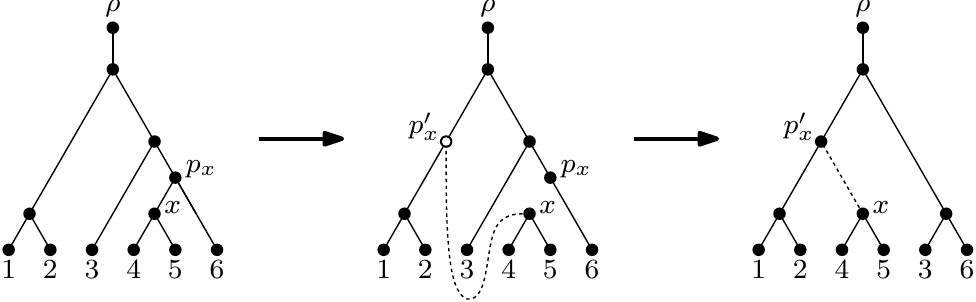}}
	\hspace*{\stretch{1}}
	\\
	\hspace*{\stretch{1}}
	\hspace*{\stretch{1}}

	\caption{(a) An $X$-tree $T$.
		(b) $T(V)$, where $V = \set{1,2,5}$.
		(c) $T|V$.
		(d) An rSPR operation transforms $T$ into a new tree $T'$ by \emph{pruning} a \emph{subtree} and \emph{regrafting} it in another location.}
	\label{fig:trees}
\end{figure*}

Using our methods to simulate these random walks, we found that the distribution of access times between pairs of trees can be described by distance between the trees, the degrees of the trees, and the curvature.
Moreover, we found that rSPR graphs for trees with 7 or more leaves have tree pairs with negative curvature, corresponding to direct paths that are difficult to traverse stochastically.
By getting a more fine-tuned understanding of the rSPR neighborhood of pairs of vertices, we are able to give bounds on the earth mover's distance in this context and thus curvatures under these random walks.
In particular, we present a full characterization of the change in rSPR degree that occurs from a given rSPR move and find that even though they each count as one move, rSPR moves which modify large subtrees are less likely to be explored during these random walks.
Pairs of trees separated by such moves correspond to the pairs with negative curvature identified in our simulation results.
These pairs occur infrequently in these well-connected graphs, however, they may be more problematic in real posterior distributions where the majority of probability is spread over a relatively small number of trees~\cite{Whidden2015-yi}.
In summary, we extend knowledge about an important graph for phylogenetics, specifically in a way that models phylogenetic MCMC search.

The automated computational analysis code can be found at \url{https://github.com/matsengrp/curvature}.
Proofs of our theorems and lemmas can be found in the appendix.

\section{Preliminaries}

\begin{figure*}
	\hspace*{\stretch{1}}
	\includegraphics[scale=1.25]{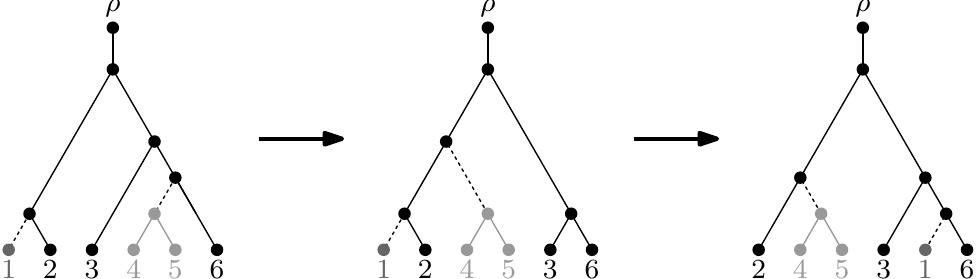}
	\hspace*{\stretch{1}}
	\caption{\
    Two rSPR operations, each of which moves one grey subtree.
    The leftmost and rightmost trees are rSPR distance two apart.}
	\label{fig:two-spr}
\end{figure*}

We follow the definitions and notation from~\cite{bordewich05,whidden2013hybridization, Whidden2015-yi}.
A (rooted binary phylogenetic) $X$-tree is a rooted tree $T$ whose nodes have zero or two children such that the leaves of $T$ are bijectively labelled with the members of a label set $X$.
As in~\cite{bordewich05,whidden2013hybridization,Whidden2015-yi}, the tree is augmented with a labelled root node $\rho$ and $\rho$ is considered a member of $X$ (Fig.~\ref{fig:x-tree}).
We generally use $n$ to refer to the number of leaves in an $X$-tree.
For a subset $V$ of $X$, $T(V)$ is the smallest subtree of $T$ that connects all nodes in $V$ (Fig.~\ref{fig:subtree}).
The $V$-tree induced by $T$ is the smallest tree $T|V$ that can be obtained from $T(V)$ by suppressing unlabelled nodes with fewer than two children (Fig.~\ref{fig:induced}).
For the rest of the paper, \textbf{we will assume that all phylogenetic trees are binary and rooted}, and that tree inclusion is rooted tree inclusion.

A \emph{parent (sub)tree} of a subtree $U$ is the smallest subtree strictly containing $U$.
A \emph{parent edge} of a subtree $U$ is the edge connecting $U$ to the rest of the tree.
The \emph{internal edges} of a tree are the edges that do not contact a leaf or $\rho$.
A \emph{ladder tree} (also known as a \emph{caterpillar tree}) is a tree such that every internal node has a leaf as a direct descendant.
A \emph{balanced tree} is a tree such that the sum of the depths of internal nodes is minimum over all trees with the same number of leaves.
The \emph{least common ancestor} (LCA) of a set $R$ of two or more nodes is the unique node that is an ancestor of each node $r \in R$ and at maximum depth.
Similarly, the LCA of two or more subtrees is the LCA of their parent nodes.

A (rooted) \emph{subtree-prune-regraft} (rSPR) operation on an $X$-tree $T$ cuts an edge $e = (x, p_x)$ where $p_x$ denotes the parent of node $x$.
$T$ is divided into two subtrees $T_x$ and $T_{p_x}$ containing $x$ and $p_x$, respectively.
Then the operation adds a new node $p'_x$ to $T_{p_x}$ by subdividing an edge of $T_{p_x}$ and adding a new edge $(x, p'_x)$, making $x$ a child of $p'_x$.
Finally, $p_x$ is suppressed, joining the two edges on either side of that node.
See Figure~\ref{fig:spr} for an example.
The inclusion of $\rho$ allows for rSPR moves which move subtrees to the root of the tree.

rSPR operations give rise to a distance measure between $X$-trees: $\dspr{T_1, T_2}$ is the minimum number of rSPR operations required to transform an $X$-tree $T_1$ into $T_2$.
For example, the trees in Figure~\ref{fig:two-spr} are separated by two rSPR operations.
Moreover, rSPR operations naturally give rise to a graph on the set of $X$-trees for which this distance is simply the shortest-path graph distance.
Let $\mathcal{T}_n$ be the set of trees with $n$ leaves and label set $X = \set{1, 2, \ldots n, \rho}$.
Then the rSPR graph $G$ of $\mathcal{T}_n$ is the graph with vertex set $V(G) = \mathcal{T}_n$ and edge set $E(G) = \set{(T,S) \mid \dspr{T,S} = 1, T \in V, S \in V}$.

To avoid confusion between the two types of graph structures considered here, we refer to vertices of the rSPR graph as \emph{vertices} and vertices of individual trees (i.e.\ leaves and internal nodes) as \emph{nodes}.
Let $N(T)$ be the set of rSPR neighbors of a tree $T$ (this does not include $T$).
For example, the tree $T$ with 4 leaves in Figure~\ref{fig:neighborhood} has 10 neighbors.
We say that the degree of $T$ is $\degree{T}$, that is, the number of trees which can be obtained from $T$ by a single rSPR operation.
We assume that all trees are bifurcating, and thus use degree to refer only to the degree of rSPR graph vertices.

Ricci-Ollivier curvature provides a rigorous yet intuitive formalization of the shape of a metric space with respect to a random walk.
For the purposes of this paper, we will specialize to that space being a graph equipped with the shortest-path distance.
For a more rigorous presentation in the more general setting of a Polish metric space, see~\cite{Ollivier2009-bw} or the survey~\cite{Ollivier2010-ao}.

Let $m_x$ and $m_y$ be probability densities of the position of a specified random walk after one step of the random walk, starting at points $x$ and $y$ of a graph $G = (V,E)$, respectively.
The transportation distance~\cite{Villani2003-wv} (equivalently Wasserstein distance, or ``earth movers distance''~\cite{rubner2000earth}) between $m_x$ and $m_y$ is the minimum amount of ``work'' required to move $m_x$ to $m_y$ along edges of the graph, that is
\vspace{-0.5em}
\begin{equation}
W_1(m_x, m_y) := \min_{\xi \in \Pi(m_x, m_y)} \sum_{\{z,w\} \subset V} d(z,w) \xi(z,w),
\end{equation}
where $d(z,w)$ is the graph shortest-path distance ($\dspr{z,w}$ in our case) and $\Pi(m_x, m_y)$ is the set of densities on $V \times V$ that are $m_x$ after projecting on the first component and $m_y$ after projecting on the second.

The \emph{coarse Ricci-Ollivier curvature} of $x$ and $y$ is then defined as:
\vspace{-0.5em}
\begin{equation}
\curvature{m; x, y} := 1 - \frac{W_1(m_x, m_y)}{d(x, y)}.
\label{eq:curvatureDef}
\end{equation}
For the purposes of this paper, ``curvature'' without further specification will refer to \eqref{eq:curvatureDef}.
We will use $\curvature{x,y}$ to denote the curvature of the simple (uniform choice of neighbor) random walk, and use $\curvature{\MH; x,y}$ to indicate curvature with respect to the Metropolis-Hastings random walk sampling the uniform distribution (described in detail in Section~\ref{sec:random_walks}).
Positive curvature implies that the neighborhoods $m_x$ and $m_y$ are closer in transportation distance than point masses at $x$ and $y$, zero curvature implies that they are neither closer nor farther, and negative curvature implies that $m_x$ and $m_y$ are more distant than point masses at $x$ and $y$.
Curvature thus provides an intuitive measure of the difficulty of moving between regions of the graph with a random walk.
\begin{figure}[!h]
	\hspace*{\stretch{1}}
	\includegraphics[width=0.43\textwidth]{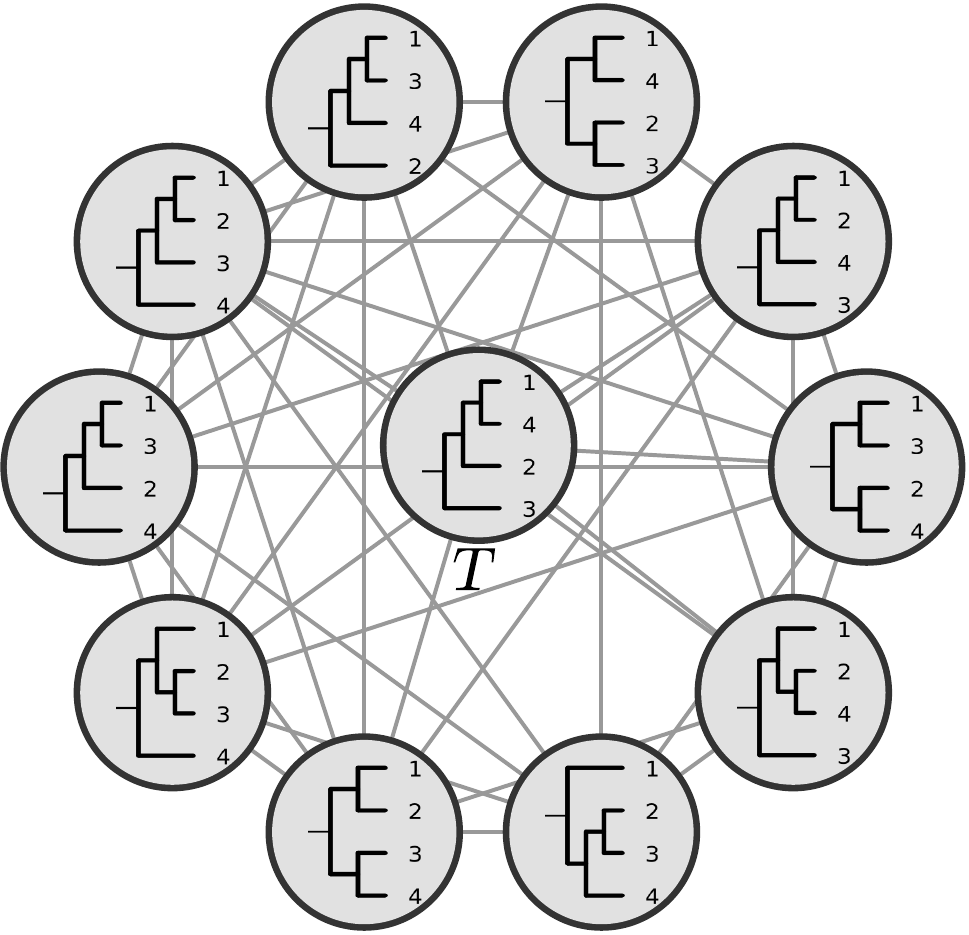}
	\hspace*{\stretch{1}}
	\caption{The neighborhood of an $X$-tree $T$ with 4 leaves, showing connections between neighbors.}
	\label{fig:neighborhood}
\end{figure}

Lin et al.~\cite{lin2011ricci} defined a variant definition of curvature in terms of lazy random walks which Loisel and Romon~\cite{Loisel2014-hu} dubbed the \emph{asymptotic Ricci-Olivier curvature}.
The lazy random walk only travels according to $m_x$ with probability $p$ and otherwise stays put.
Thus the lazy mass assignment $m^p_x$ is the sum of $p\, m_x$ and a point mass of $1 - p$ on $x$.
We denote the coarse curvature of the $p$-lazy random walk between two vertices $x$ and $y$ with respect to a random walk $m$ by $\curvature[p]{m; x,y}$.
For example, $\curvature[1/4]{m; x,y}$ describes the curvature of the lazy random walk that follows the given random walk $m$ with probability $1/4$ and remains stationary with probability $3/4$.
The asymptotic Ricci-Ollivier curvature of $x$ and $y$ is then:
\begin{equation}
\ric{m; x, y} := \lim_{p \rightarrow 0} \frac{\curvature[p]{m; x,y}}{p} .
\label{eq:asympCurvatureDef}
\end{equation}
As above for $\kappa$, we use $\ric{x, y}$ as shorthand for $\ric{m; x, y}$ when $m$ is the uniform lazy random walk, and $\ric{\MH; x, y}$ when $m$ is the Metropolis-Hastings random walk sampling the uniform distribution (Section~\ref{sec:random_walks}).
This definition of curvature is invariant of $p$ for small $p$~\cite{Loisel2014-hu} and can be used to avoid parity problems on graphs where the uniform random walk is periodic without choosing a specific laziness parameter (e.g. Ollivier often considered $\curvature[\frac{1}{2}]{x,y}$ for this purpose).
As we prove in Lemma~\ref{lem:asymptotic}, the notions of coarse and asymptotic curvature differ only by a small factor bounded by $\frac{2}{\max(\degree{x}, \degree{y})}$ between adjacent vertices and are equal for nonadjacent vertices.

\section{Efficient algorithms for computing and sampling rSPR graphs}

\subsection{Computing the rSPR graph of $m$ trees with $n$ leaves in $\OhOf{mn^3}$-time.}
\label{sec:computing_treespace}

It is necessary to have an efficient method of constructing the full rSPR graph for a fixed number of leaves in order to study it.
The previous best algorithm for this problem requires $\OhOf{m^2 n}$ time, where $m$ is the number of trees in the graph and $n$ the number of leaves~\cite{Whidden2015-yi}.
Here we reduce that time to $\OhOf{mn^3}$.
Note that for the full rSPR graph, $m$ is the rapidly growing function $(2n-3)!!$, that is, $3 \cdot 5 \cdot \ldots \cdot (2n-3)$, and this is therefore a significant improvement in practice, as we demonstrate below.

In previous work~\cite{Whidden2015-yi}, we constructed (unrooted) SPR graphs from subsets of $m$ high probability trees sampled from phylogenetic posteriors to compare mixing and identify local maxima.
Although the SPR distance (rooted and unrooted) is NP-hard to compute~\cite{bordewich05,hickey2008sdc}, it is fixed-parameter tractable with respect to the distance in the rooted case~\cite{bordewich05}.
In particular, one can determine in $\OhOf{n}$-time whether two rooted phylogenetic trees are adjacent in the rSPR graph ($\OhOf{n^2}$-time for unrooted trees) using the algorithms of Whidden et al.~\cite{whidden2009unifying,whidden2010fast, whidden2013hybridization,Whidden2015-yi}.
We applied this method comparing each of the $m$ trees pairwise to identify adjacencies, requiring a total of $\OhOf{m^2n}$-time ($\OhOf{m^2 n^2}$-time in the unrooted case).
However, this method is impractical when applied to construct graphs with 7 or more leaves, due to the rapidly growing $\OhOf{m^2}$ factor.

The key to our efficient algorithm for quickly computing dense rSPR graphs (those containing a significant portion of the full rSPR graph) lies in avoiding the pairwise comparison of non-adjacent trees and thereby shaving off an $\OhOf{m}$ factor.
The input to our algorithm is a set $\mathcal{T}$ of phylogenetic trees in the $\OhOf{n}$-length Newick \cite{wiki:newick} representation of each tree as a string.
These representations are made unique by ordering each tree so that leftmost subtrees contain the smallest alphanumeric label of descendants.
We construct a mapping from each tree $T_i$ to its order index in this list $i$.
Begin with an empty graph $G$.
For each tree $T_i$, we first add a vertex $i$ to the graph and then use Corollary~\ref{cor:enumerate_neighbors} below to enumerate the $\OhOf{n^2}$ neighbors of $T_i$ in the rSPR graph in $\OhOf{n^3}$-time.
This efficient enumeration procedure is the key step required to achieve our desired running time of $\OhOf{mn^3}$.
We use the tree to index mappings to determine whether these trees are already vertices of the graph and, if so, add an edge in the graph from $T_i$ to each such neighbor $T_j$.
The high-level steps are as follows, and we show in Theorem~\ref{thm:construct_graph} that this algorithm is correct and can be implemented to run in the stated time.

\vspace{1em}
\textsc{Construct-rSPR-Graph($\mathcal{T}$)}
\begin{enumerate}[label={\arabic*}.]
	\item Let $G$ be an empty graph.
	\item Let $M$ be a mapping from trees to integers.
	\item Let $i = 0$.
	\item For each of the $m$ trees: \vspace{-0.2em}
		\begin{enumerate}
			\item Add a vertex $i$ to $G$ representing the current tree $T_i$.
			\item Add $T_i \rightarrow i$ to $M$.
			\item For each of the $\OhOf{n^2}$ neighbors of $T_i$, enumerated using \textsc{Enumerate-rSPR-Neighbors($T_i$)}:
				\begin{enumerate}
					\item If the current neighbor $T_j$ is in $M$ then add an edge $(i,M[T_j])$ to $G$.
				\end{enumerate}
		\item $i = i + 1$.
		\end{enumerate}
\end{enumerate}

\begin{restatable}{thm}{constructgraph}
	\label{thm:construct_graph}
	The subgraph of the rSPR graph induced by a set $\mathcal{T}$ of $m$ trees with $n$ leaves can be constructed in $\OhOf{mn^3}$-time.
\end{restatable}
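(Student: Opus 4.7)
The plan is to verify two things separately: correctness of \textsc{Construct-rSPR-Graph}, and an $\OhOf{n^3}$ per-iteration cost.

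For correctness, I would argue that any edge $(T_i, T_j)$ of the induced rSPR subgraph, say with $i < j$, is caught during the $j$-th iteration. By Corollary~\ref{cor:enumerate_neighbors} the enumeration procedure emits every rSPR neighbor of $T_j$, so $T_i$ is emitted; since $T_i$ was inserted into $M$ when $i$ was processed, the lookup $M[T_i]$ succeeds and the edge $(j, M[T_i])$ is added to $G$. Conversely, the algorithm only ever adds edges between trees that are rSPR-adjacent by construction, so no spurious edges appear, and each such edge is added exactly once (at the later-indexed endpoint).

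For the time analysis, the outer loop runs $m$ times. In iteration $i$ we do: one vertex insertion, one insertion of $T_i$ into the dictionary $M$, one enumeration of the $\OhOf{n^2}$ rSPR neighbors of $T_i$ at cost $\OhOf{n^3}$ from Corollary~\ref{cor:enumerate_neighbors}, $\OhOf{n^2}$ lookups in $M$, and at most $\OhOf{n^2}$ edge insertions into the adjacency structure of $G$. Taking $M$ to be a hash map keyed on the canonical Newick string (of length $\OhOf{n}$) of each tree, each insertion and each lookup costs $\OhOf{n}$ expected time, so the $\OhOf{n^2}$ lookups contribute $\OhOf{n^3}$ in aggregate, matching the enumeration cost. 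Summing gives $\OhOf{n^3}$ per iteration and hence $\OhOf{mn^3}$ in total.

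I expect the main obstacle to lie not in the accounting above but in the black-box use of Corollary~\ref{cor:enumerate_neighbors}: the enumeration needs to emit each of the $\OhOf{n^2}$ rSPR neighbors in amortized $\OhOf{n}$ time and, crucially, in the same canonical Newick form used to key $M$, so that membership queries are both cheap and meaningful. A secondary point worth spelling out is that the ordering convention (leftmost subtrees contain the smallest descendant label) makes two Newick strings coincide exactly when the underlying rooted phylogenetic trees are isomorphic as $X$-trees, which is what justifies using string hashing as a surrogate for tree-identity testing.
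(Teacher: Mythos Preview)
Your argument is essentially the same as the paper's: correctness via the invariant that by the end of iteration $i$ the induced subgraph on $\{T_1,\dots,T_i\}$ has been built, and time via the $\OhOf{n^3}$ neighbor enumeration dominating each iteration.  The one substantive difference is your choice of dictionary.  You take $M$ to be a hash map on canonical Newick strings, which gives $\OhOf{n}$ \emph{expected} time per lookup; the paper instead stores the Newick strings in a trie, so each insertion and lookup is $\OhOf{n}$ in the worst case and the stated $\OhOf{mn^3}$ bound is deterministic.  If the theorem is read as a worst-case statement, your version only establishes an expected-time bound; swapping in a trie (or any deterministic string dictionary with linear-time operations) closes that gap with no change to the rest of your analysis.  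Your closing remarks about canonical Newick form and the need for the enumeration to emit neighbors already in that form are exactly the right concerns; the paper handles them the same way you suggest, via the leftmost-smallest-label ordering.
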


We implemented this procedure in the C++ program \texttt{dense\_spr\_graph} of the software package \texttt{spr\_neighbors}~\cite{spr_neighbors}, which outputs an edge list format graph suitable for input to other software.
The construction procedure reduced the time required to compute the 10,395-vertex 7-taxon rSPR graph from 2,104.68 seconds to 12.71 seconds on an Intel Core 2 Duo E7500 desktop running Ubuntu 14.04.
Moreover, although we do not study the 135,135-vertex 8-taxon rSPR graph in this paper, our algorithm required only 303.45 seconds to construct it on the same hardware.
Constructing the 8-taxon rSPR graph using the previous method required 377,395 seconds (more than 4 days), and thus that method is infeasible for constructing larger rSPR tree graphs.
Thus, we believe our fast graph construction procedure will itself be useful for further studies of rSPR graph subsets similar to~\cite{Whidden2015-yi}, as the algorithm can quickly construct rSPR graphs for any given subset of trees.

\subsection{Simulating random walks on the rSPR graph.}
\label{sec:random_walks}
The uniform random walk moves from one vertex to one of its neighbors uniformly at random, which makes this walk more likely to sample higher degree vertices.
In contrast, the Metropolis Hastings (MH) random walk with constant likelihood function proposes a move from a tree $T$ to a neighbor tree $S$ uniformly at random and then accepts the move according to the Hastings ratio, $\min\left(1, \frac{\degree{T}}{\degree{S}}\right)$.
The MH random walk is guaranteed to sample each tree uniformly at random and is therefore representative of a phylogenetic MCMC program sampling trees under a uniform prior.

To efficiently simulate the MH random walk, we developed a linear-time algorithm for proposing rSPR moves that does not require the rSPR graph to be explicitly built and stored in memory.
A na\"ive approach would require $\OhOf{n^3}$ time: $\OhOf{n}$ time to generate each of the $\OhOf{n^2}$ neighbors of a given tree so that one could be picked uniformly at random.
To eliminate an $\OhOf{n^2}$ factor, we developed a deterministic ordering of rSPR moves with a one-to-one correspondence to rSPR neighbors, as described in the next paragraph.
Given such an order, a uniform neighbor can be selected by its index in $\OhOf{n}$ time.
We note that the recursive formula of Song~\cite{Song2003-gf} for the degree of a tree does not group rSPR moves that move a particular subtree, and thus would still require $\OhOf{n^2}$ time to select a specific rSPR neighbor by index.

We consider the distribution of rSPR moves in terms of the number of nodes contained within a subtree.
Recall that a tree with $n$ leaves has $2n-1$ total nodes (ignoring the artificial $\rho$ node).
Given a subtree $R$ with $x$ nodes, observe that there are $2n-1 - x$ possible locations to regraft $R$.
However, some of these moves will result in the same neighboring tree as other rSPR moves.
In particular, where we call the edge connecting the subtree rooted at that node to the rest of the tree the ``node's edge'', we have:
\begin{enumerate}
\item Moving $R$ to its sibling edge results in the same tree, not a neighboring tree,
\item Moving $R$ to its parent edge results in the same tree,
\item Moving $R$ to its grandparent edge is the same as moving its aunt to its sibling edge, and
\item Moving $R$ to its aunt edge is the same as moving its aunt to $R$'s edge.
\end{enumerate}
We prove in Lemma~\ref{lem:compute_degree} that this list is exhaustive, that is each other pair of $R$ and destination edge $e$ results in a unique rSPR neighbor.
We assign $(2n-1-x)-2$ moves to children of the original non-$\rho$ root (lacking both an aunt and a grandparent), and $(2n-1-x)-4$ moves to each other non-root node.
Let $N(T,u)$ denote the neighbors of $T$ assigned to node $u$, obtained by moving the subtree $R$ rooted at $u$.
We thus achieve a new method for computing the neighborhood size:

\begin{restatable}{lem}{computedegree}
	\label{lem:compute_degree}
	For a tree $T$ with $n$ leaves,
	$$\degree{T} = \sum_{u \in T} \size{N(T,u)},$$
	for nodes $u$ of $T$, where $N(T,u)$ is as defined above, and:
	$$\size{N(T,u)} = \begin{cases}
		2n - x - 5 &\text{if depth($u$) $ > 1$, } \\
		2n - x - 3 &\text{if depth($u$) $ = 1$ } \\
		0 &\text{if depth($u$) $ \le 0$ } \\
	\end{cases}.$$
\end{restatable}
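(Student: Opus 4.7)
The plan is to partition the rSPR neighbors of $T$ according to which node's subtree is pruned, count each part exactly, and then sum. Fix a non-root node $u$ with subtree $R$ of size $x$, parent $p_u$, sibling $s_u$, and (when they exist) grandparent $g_u$ and aunt $a_u$. Any rSPR operation pruning $R$ is determined by its regraft location, which I parameterize as the position ``above some node $v$'' for $v \in T \setminus R$; since $|T \setminus R| = 2n - 1 - x$, this gives $2n - 1 - x$ candidate positions. Each contributes a member of $N(T, u)$ unless it is one of the four redundant cases informally described above the statement.

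I would then verify those four redundancies by direct local computation on the fragment spanned by $u, s_u, p_u, g_u, a_u$. For $v = s_u$, regrafting above the sibling recreates the original local structure and returns $T$ itself; for $v = p_u$, the new parent simply replaces $p_u$ in function after $p_u$ is suppressed, again yielding $T$. For $v = g_u$, the result has $u$ as a sibling of $g_u$ under a new unlabelled node, with $g_u$'s children being $s_u$ and $a_u$, and this tree is identical to the one obtained by pruning the aunt subtree $A$ and regrafting above $s_u$, so the neighbor is canonically assigned to $a_u$ rather than $u$. An analogous calculation for $v = a_u$ shows the resulting tree matches moving $A$ above $u$'s edge $(u, p_u)$, again assigned to $a_u$. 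When $u$ is at depth~$1$ neither $g_u$ nor $a_u$ exists, so only the two trivial cases are excluded, yielding $|N(T,u)| = 2n - x - 3$; for deeper non-root $u$ all four are excluded, giving $2n - x - 5$; when $u$ is the root there is nothing to prune, so $|N(T,u)| = 0$.

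To complete the count I need to show that the remaining positions yield pairwise distinct neighbors and that no neighbor is assigned to two different source nodes. Distinct choices of $v$ specify distinct edges in the pruned tree for $u$'s new parent, so the resulting trees differ. For a cross-node coincidence with $u \ne u'$, inspecting which subtrees have changed position between $T$ and the common resulting tree forces $u'$ to lie among $u$'s local relatives, and a short case-check shows every such coincidence is already covered by redundancy (iii) or (iv), with the $u$-side representation excluded. In particular, the NNI-type ``triangle'' configuration, in which moving any of $u$, $s_u$, or $a_u$ produces the same neighbor, resolves because two of the three representations, namely $(u, g_u)$ and $(s_u, a_u)$, land in their respective excluded sets, leaving only the canonical $(a_u, s_u)$ counted under $a_u$.

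The main obstacle is this injectivity case analysis: rigorously confirming that the four listed exclusions exhaust all coincidences, so that no exotic equivalence through more distant source nodes is missed, requires a careful local inspection of both the pruning and regrafting steps across the possible configurations. Once this is in hand, summing $|N(T,u)|$ over all nodes $u$ counts each rSPR neighbor of $T$ exactly once, yielding $\degree{T} = \sum_u |N(T, u)|$ with the stated case values.
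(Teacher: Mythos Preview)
Your proposal is correct and follows essentially the same approach as the paper: partition neighbors by the pruned node, exclude the four redundant regraft targets, and argue that the resulting assignment is injective across source nodes by reducing any cross-node coincidence to one of the NNI-type redundancies (iii) or (iv). The paper's write-up is terser---it takes the per-node count as established by the preceding discussion and proves only the cross-node disjointness, phrasing the reduction to NNI via the observation that $T\setminus U = S\setminus U$ and $T\setminus V = S\setminus V$---but the substance of the argument is the same as yours.
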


In particular, this formulation implies a total ordering of rSPR moves such that every move moving the same subtree $R$ forms a contiguous subsequence.
We can thus apply the following algorithm to select a neighbor uniformly at random for a tree $T$:

\vspace{1em}
\textsc{Select-rSPR-Neighbor($T$)}
\begin{enumerate}[label={\arabic*}.]
	\item	Compute the degree of $T$, $\degree{T}$ using Lemma~\ref{lem:compute_degree}.
	\item Pick a random integer $r$ in the range $[1,\degree{T}]$.
	\item Label each node $u$ of $T$ by its preorder number and compute the number of nodes in the subtree rooted at each $u$.
\item For each tree node $u$ and while $r > 0$:
	\begin{enumerate}
		\item Decrease $r$ by $\size{N(T,u)}$.
		\item If $r < 0$, let $S$ be the $\abs{r}$ member of $N(T,u)$ and terminate the for loop.
	\end{enumerate}
\item Return the neighbor $S$.
\end{enumerate}

\begin{figure*}
	\hspace*{\stretch{1}}
	\subfigure[6 taxa]{\includegraphics[width=3in]{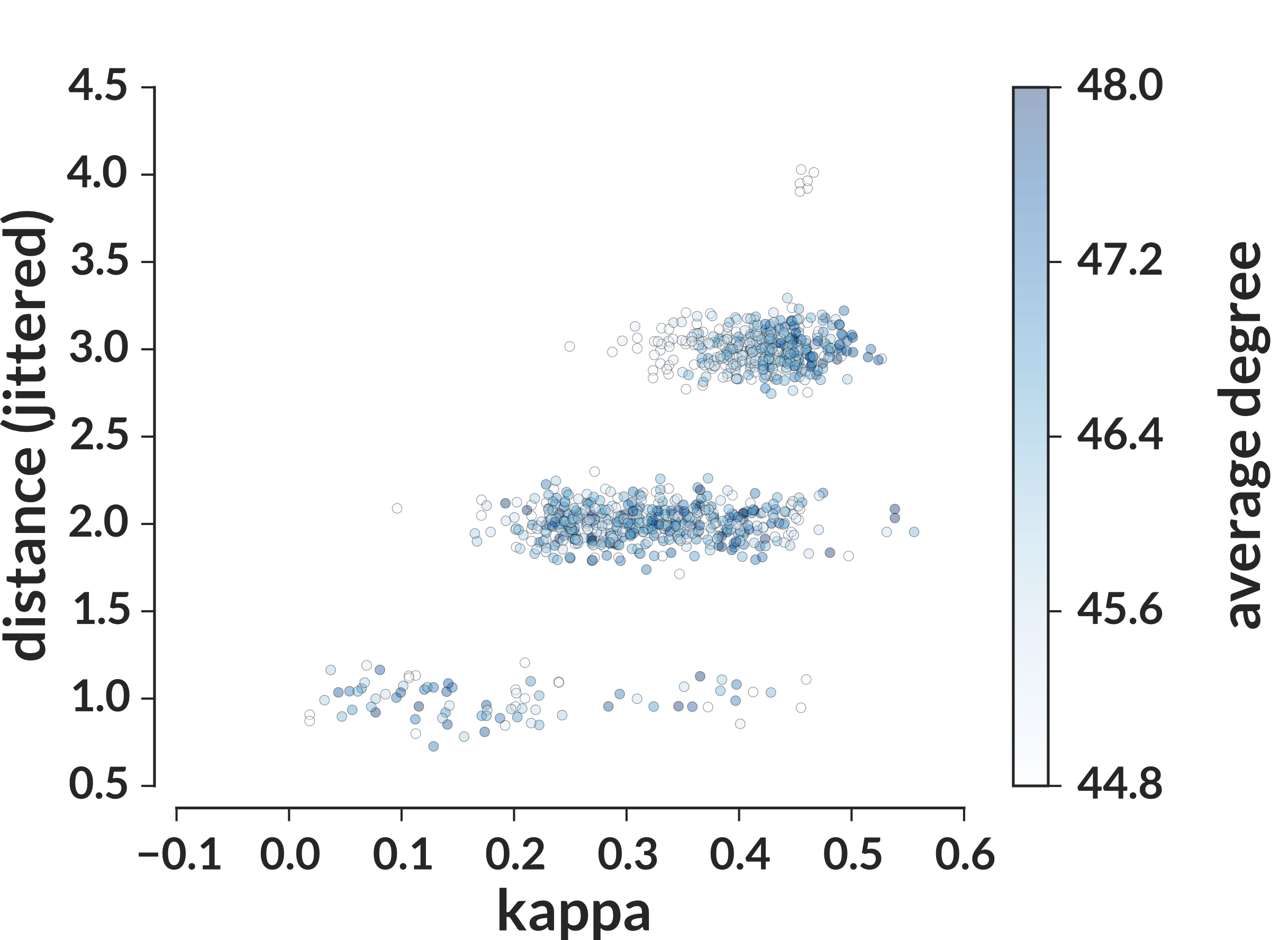}}
	\hspace*{\stretch{2}}
	\subfigure[7 taxa]{\includegraphics[width=3in]{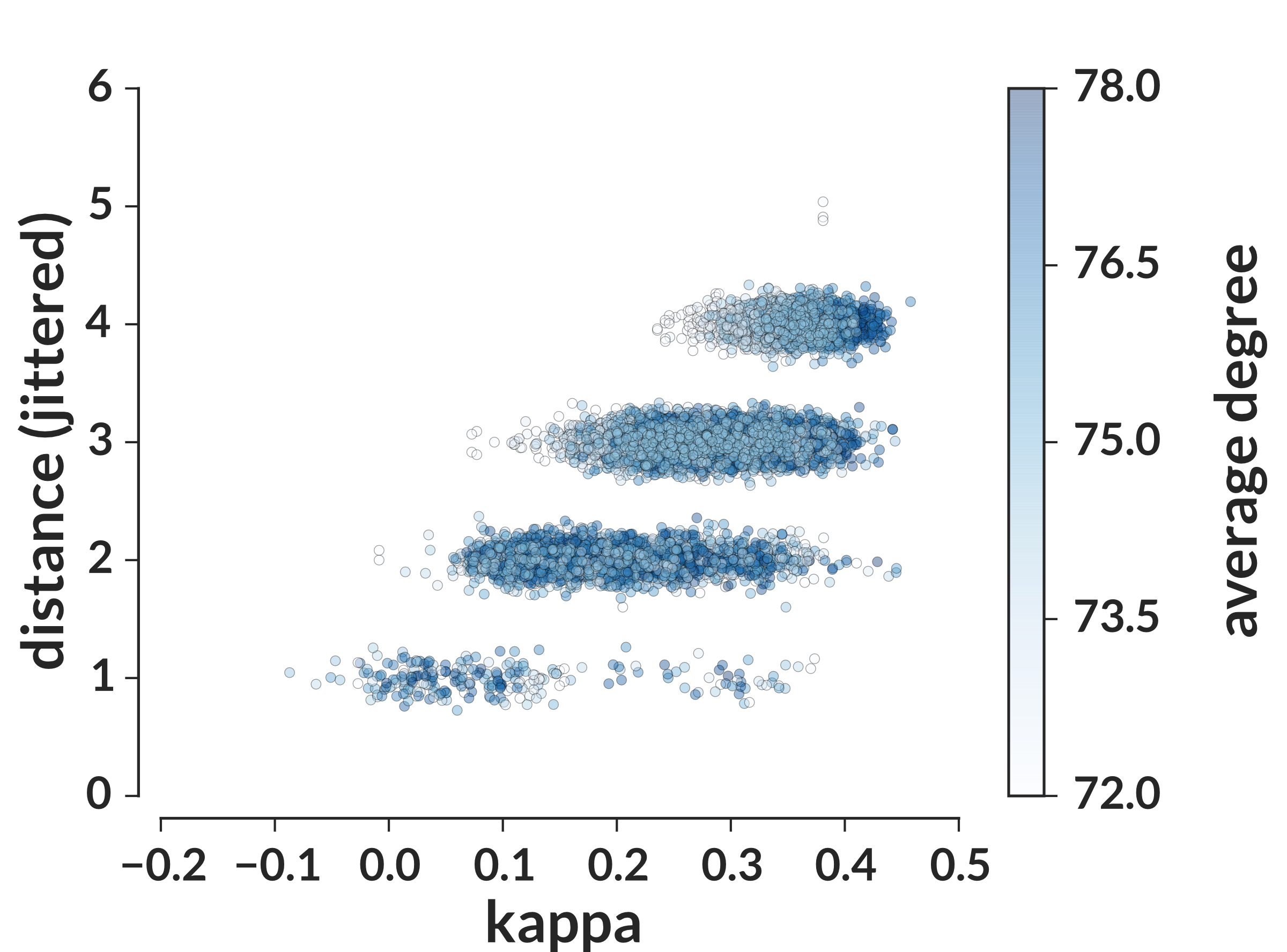}}
	\hspace*{\stretch{1}}
    \caption{Scatter plot of $\curvature{\MH; T_1,T_2}$ values versus $\dspr{T_1,T_2}$ for the rSPR graph. Color displays the average degree of $T_1$ and $T_2$. Distance values randomly perturbed (``jittered'') a small amount to avoid superimposed points.}
	\label{fig:rspr-scatter}
\end{figure*}

\begin{restatable}{lem}{selectrandomneighbor}
	\label{lem:select_random_neighbor}
	An rSPR neighbor of a tree $T$ can be chosen uniformly at random in $\OhOf{n}$-time using $\OhOf{n}$ space.
\end{restatable}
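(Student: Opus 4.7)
The plan is to verify that \textsc{Select-rSPR-Neighbor($T$)} outputs a uniformly random neighbor in the claimed time and space bounds. Correctness is essentially already packaged in Lemma~\ref{lem:compute_degree}: the four reduction rules enumerated just before that lemma are precisely what is needed to guarantee that every rSPR neighbor of $T$ is produced by exactly one pair $(u, e)$ consisting of a source node $u$ and a destination edge $e$ assigned to $u$. Hence the sets $\{N(T,u)\}_{u \in T}$ partition the rSPR neighborhood of $T$, and sampling $r$ uniformly from $[1, \degree{T}]$ and returning the element of $N(T,u)$ in the position indicated by $r$ (within the concatenation of the $N(T,u)$ in the preorder order used by the algorithm) yields a uniform draw from $N(T)$.

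For the running time, I would first compute, by a single DFS, the preorder index, depth, and subtree size of every node in $O(n)$ time and $O(n)$ space. Given these, Lemma~\ref{lem:compute_degree} evaluates $\degree{T}$ as a sum of $O(1)$ terms over the $2n-1$ nodes, so Step~1 of the algorithm is $O(n)$; Step~2 is $O(1)$; Step~3 is $O(n)$. The main loop iterates at most $2n-1$ times and performs constant work per iteration apart from the final selection step.

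The main obstacle is implementing the final selection step — finding the $|r|$-th member of $N(T,u)$ and materialising the resulting neighbor $S$ — in total time $O(n)$. My plan is to exploit the preorder layout. If $u$ has preorder index $p$ and subtree size $s$, then the non-$\rho$ nodes of $T$ lying outside the pruned subtree $R$ are exactly those with preorder indices in $[0,p-1] \cup [p+s, 2n-2]$, and the moves in $N(T,u)$ correspond bijectively to such indices with at most four further preorder positions removed (those of the sibling, parent, grandparent, and aunt of $u$, all of which lie outside $R$). Sorting these four positions in $O(1)$ and walking through the resulting at-most-five subintervals gives a constant-time map from the linear index $|r|$ to the preorder index of the target node, and hence to the destination edge. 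Materialising $S$ by cutting the edge at $u$, suppressing the parent of $u$, and subdividing the target edge takes $O(n)$ time using standard tree-edit operations on a pointer representation.

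Combining the $O(n)$ cost of preprocessing, the $O(n)$ cost of evaluating $\degree{T}$, the $O(n)$ cost of the main loop, and the $O(n)$ cost of the single final selection yields total $O(n)$ running time; all auxiliary arrays (preorder labels, subtree sizes, depths, and the output tree $S$) fit in $O(n)$ memory, giving the stated $O(n)$ space bound.
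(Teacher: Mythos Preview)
Your proof is correct and follows essentially the same approach as the paper: both invoke Lemma~\ref{lem:compute_degree} to establish that the sets $N(T,u)$ partition $N(T)$, both precompute subtree sizes via a linear-time traversal, and both locate the selected node $u$ by a linear scan decrementing~$r$. The only notable difference is that the paper finds the $\abs{r}$-th destination within $N(T,u)$ by simply iterating over the $\OhOf{n}$ edges outside the pruned subtree, whereas you observe that the complement of $R$ is a union of two preorder intervals and that the excluded destinations (sibling, parent, grandparent, aunt) are $\OhOf{1}$ points inside those intervals, so the destination edge can be recovered in $\OhOf{1}$ time by interval arithmetic. This refinement is correct (minor quibble: removing four points from two intervals yields at most six subintervals, not five, but this does not affect the $\OhOf{1}$ bound) and buys nothing for the overall $\OhOf{n}$ claim, since constructing the output tree $S$ already costs $\OhOf{n}$; still, it is a cleaner implementation of that step than the paper's linear scan.
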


Observe that this procedure can be easily adapted to explore the full neighborhood of a tree in $\OhOf{n^3}$ time, which we use for Theorem~\ref{thm:construct_graph}. We call the resulting procedure \textsc{Enumerate-rSPR-Neighbors($T$)}.
We thus have the following corollary:

\begin{cor}
\label{cor:enumerate_neighbors}
	The rSPR neighbors of a tree $T$ can be enumerated in $\OhOf{n^3}$-time.
\end{cor}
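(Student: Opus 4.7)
The plan is to adapt the \textsc{Select-rSPR-Neighbor} procedure of Lemma~\ref{lem:select_random_neighbor}: instead of using the total ordering to pick a single index, iterate through every valid (node, destination edge) pair and output the corresponding neighbor. By Lemma~\ref{lem:compute_degree}, partitioning rSPR moves by the moved subtree's root $u$ and restricting destinations to edges that are not the sibling, parent, grandparent, or aunt edge of $u$ gives a bijection between these pairs and the elements of $N(T)$. Thus enumerating over all pairs produces each neighbor exactly once, with no further deduplication needed.

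I would structure the algorithm as follows. First, in $\OhOf{n}$ time preprocess $T$ by computing a preorder labeling of the nodes, the subtree sizes, and pointers to each node's parent, sibling, grandparent, and aunt (when they exist); this matches the preprocessing used for \textsc{Select-rSPR-Neighbor}. Then, for each of the $\OhOf{n}$ nodes $u$ in $T$, walk over the $\OhOf{n}$ candidate regraft edges of $T$, skip the four (or two, if $u$ has depth~$1$) forbidden edges from the list preceding Lemma~\ref{lem:compute_degree}, and for each remaining edge $e$ apply the rSPR operation moving the subtree rooted at $u$ onto $e$ to produce the neighbor tree $T'$. Constructing (and, e.g., emitting in Newick form) each such $T'$ takes $\OhOf{n}$ time, since an rSPR move only alters a constant number of local adjacencies and the rest of the tree can be copied in linear time.

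Summing the costs, the loop contributes $\OhOf{n} \cdot \OhOf{n} \cdot \OhOf{n} = \OhOf{n^3}$ time for enumeration and output, dominating the $\OhOf{n}$ preprocessing. Correctness follows directly from Lemma~\ref{lem:compute_degree}: that lemma shows the stated (node, destination) pairs enumerate $N(T)$ without repetition, and the degree formula confirms that exactly $\sum_{u} \size{N(T,u)} = \degree{T}$ neighbors are produced.

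The only mild obstacle is bookkeeping: one must be careful that the four exceptional destination edges in the list preceding Lemma~\ref{lem:compute_degree} are correctly identified for each $u$ (in particular handling the depth~$\le 1$ corner cases so that the aunt/grandparent are not spuriously subtracted when they do not exist), and that the rSPR move is applied so that the resulting tree is well-formed after suppressing the old parent node. Both of these are local, constant-work operations given the preprocessing, so they do not affect the asymptotic bound, and we obtain the claimed $\OhOf{n^3}$ enumeration procedure, which we name \textsc{Enumerate-rSPR-Neighbors($T$)}.
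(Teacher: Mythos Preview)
Your proposal is correct and is essentially the same approach as the paper: the paper simply observes that the \textsc{Select-rSPR-Neighbor} procedure ``can be easily adapted to explore the full neighborhood of a tree in $\OhOf{n^3}$ time,'' naming the result \textsc{Enumerate-rSPR-Neighbors($T$)}, and your write-up spells out exactly that adaptation. If anything, you have provided more detail than the paper itself, which treats the corollary as an immediate observation following Lemma~\ref{lem:select_random_neighbor}.
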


We implemented this procedure in the C++ package \texttt{random\_spr\_walk}~\cite{random_spr_walk}.
We sampled a 200,000-iteration random walk on the 4-leaf rSPR graph and a 50,000-iteration random walk on the 5-leaf rSPR graph.

\section{Access times of random walks on the rSPR graph can be understood using distance, degree, and curvature}

\subsection{Computing curvature values.}
To compute curvature values, we first used \texttt{dense\_spr\_graph} to compute the rSPR graph for four to seven leaves, as discussed in Section~\ref{sec:computing_treespace}.
We then computed curvatures for given pairs of trees directly, by using linear programming~\cite{Loisel2014-hu} to compute the minimal mass transport $W_1$ using the SAGE \cite{SAGE} front-end to the GLPK~\cite{glpk} solver; code can be found in \cite{gricci} which grew from the code described in \cite{Loisel2014-hu}.

This would have required an enormous amount of computation to directly compute curvatures for the $((2n-3)!!)^2$ pairs of trees with $n$ leaves, even for the small values of $n$ we consider here.
We instead exploited the fact that pairs of trees which are equivalent modulo label renumbering are symmetric in the rSPR graph and therefore guaranteed to have the same curvature.
For example, the pairs
$\{(((1,2),3),4), ((1,2),(3,4))\}$ and
$\{(((1,4),2),3), ((1,4),(2,3))\}$
are the same after relabeling, so their curvatures are the same.
We thus directly computed curvature values for one representative pair from each such equivalence class, or \emph{tanglegram} \cite{Venkatachalam2010-zh}; the group-theoretic enumeration methods are described in a manuscript in preparation, and the SAGE \cite{SAGE} and GAP4 \cite{GAP4} code is at \cite{tangle}.

We find a wide variation in curvature among tanglegrams (Figure~\ref{fig:rspr-scatter}).
Curvature values tended to increase with increasing rSPR distance, and their variance decreased with increasing distance.
Neighboring trees achieved minimum curvature values for a given number of leaves, and we found maximum curvature values between trees at maximum distance or one rSPR move closer than the maximum.
This suggests that the increased difficulty of moving between trees with a random walk due to distance may be tempered somewhat by larger curvature in the highly connected rSPR~graph.

Larger rSPR graphs tended to have smaller curvature values.
Indeed, the 7-leaf rSPR graph contained adjacent pairs of trees with negative curvature.
Such pairs indicate difficult paths for phylogenetic searches, which may be exacerbated by likelihood or branch length constraints.

\begin{figure*}
	\hspace*{\stretch{1}}
	\subfigure[\label{fig:short-time-kappa-access}]{\includegraphics[width=3in]{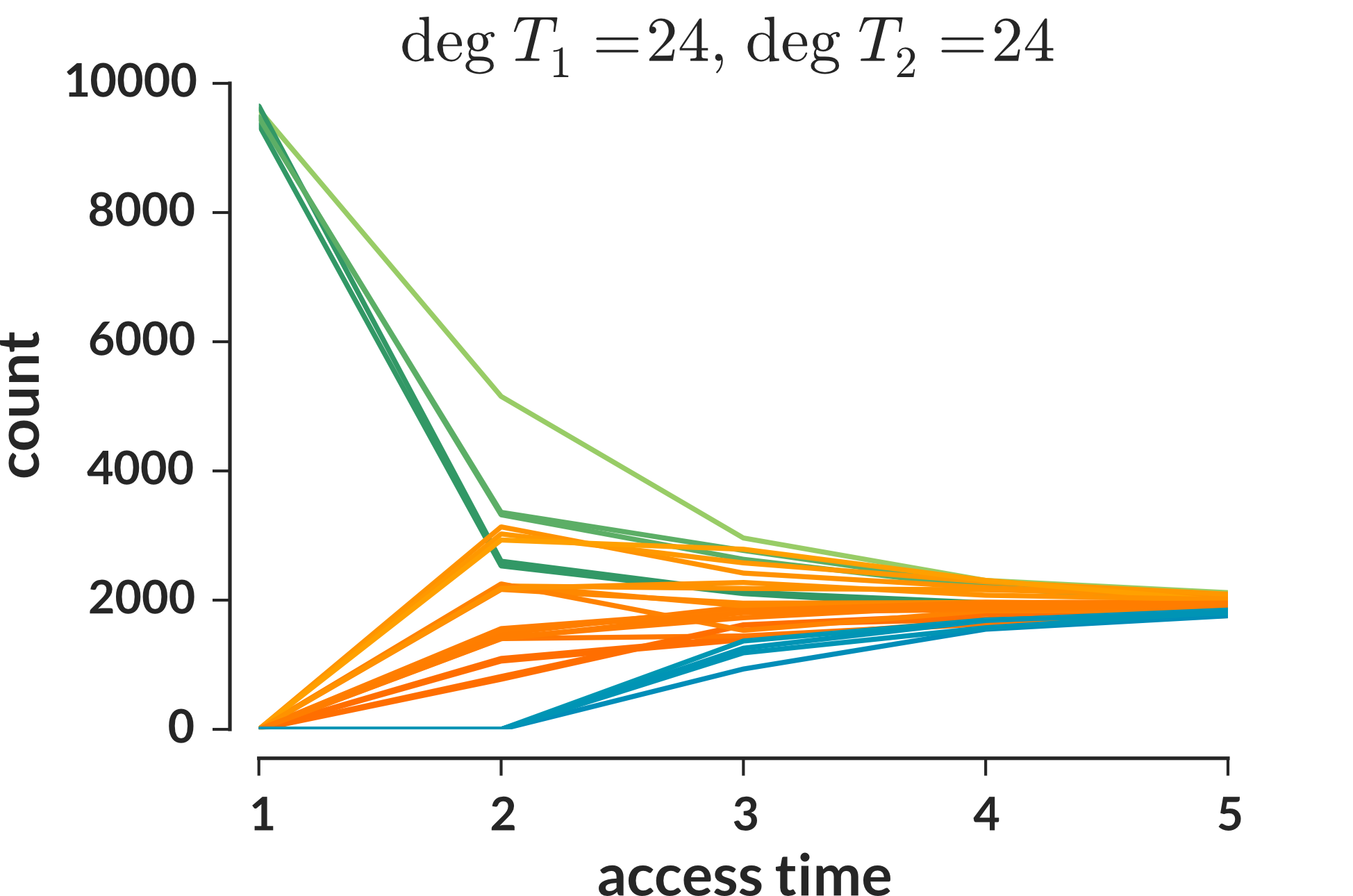}}
	\hspace*{\stretch{2}}
	\subfigure[\label{fig:long-time-kappa-access}]{\includegraphics[width=3in]{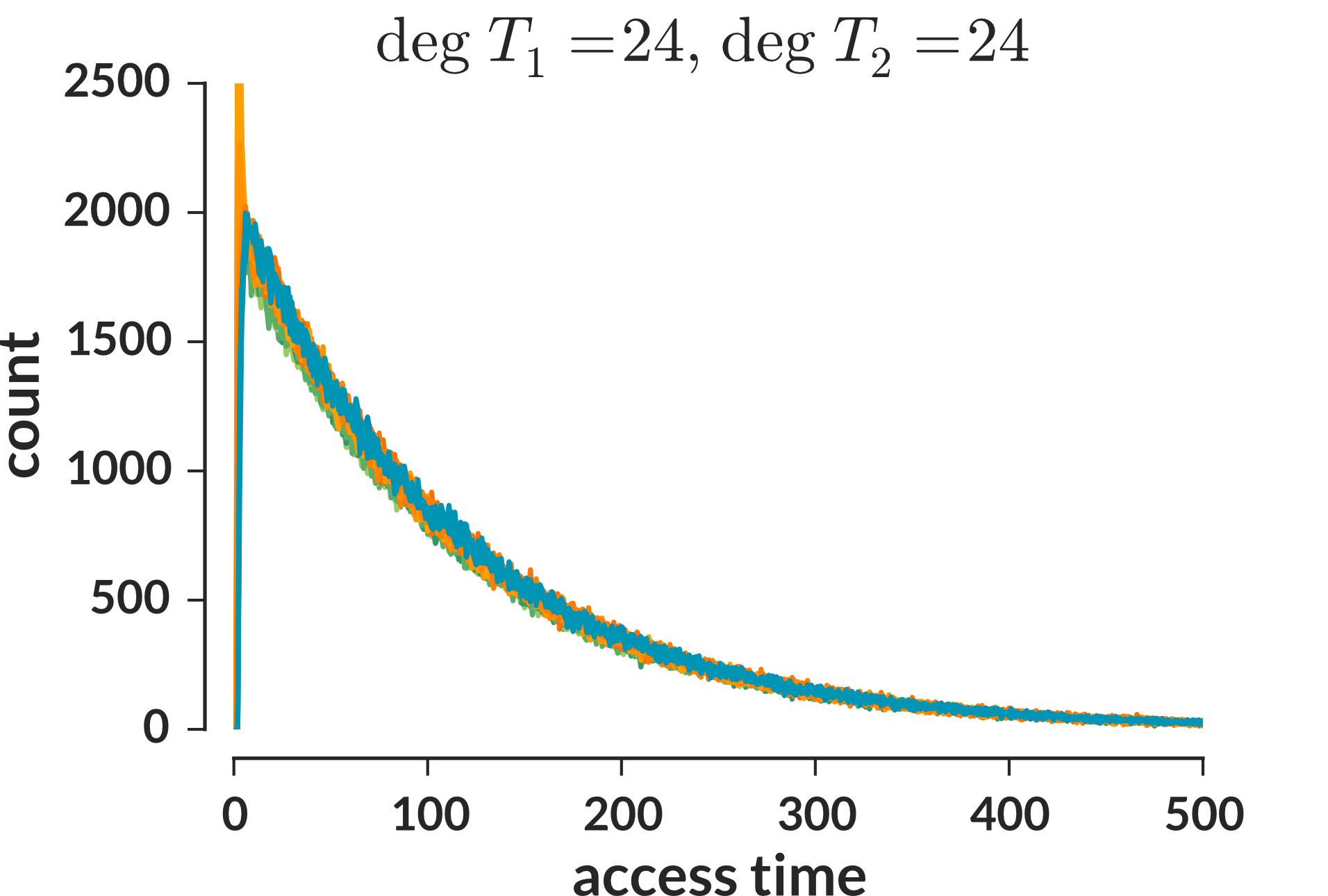}}
	\hspace*{\stretch{1}}
    \caption{\
        Distribution of rSPR $\MH$ access times for those pairs of 5-taxon trees with degree 24 that are not simple inclusions of 4-taxon pairs of trees.
        Color signifies rSPR distance between the trees, with green, orange, and blue signifying distances of 1, 2, and 3, respectively; the saturation of the color shows coarse curvature $\curvature{\MH; \cdot, \cdot}$, such that increased saturation (i.e.\ darker color) indicates a smaller $\kappa$.
        }
	\label{fig:kappa-access}
\end{figure*}

\subsection{Access time simulation.}
The access time for a pair of vertices in a graph is the (random) number of iterations required to go from one of the vertices to the other in a random walk~\cite{lovasz1993random}; we were interested in the connection between curvature and access time.
In previous work, we computed mean access times (MAT) between pairs of trees in  MCMC random walks: the mean number of iterations required to move from one tree to the other.
We applied this work to demonstrate the influence of SPR graph structure on real MCMC posteriors sampled with MrBayes~\cite{Whidden2015-yi} using \texttt{sprspace}~\cite{sprspace}.

\begin{table}
\centering
\caption{p-values for ordinary least squares linear multiple regression of rSPR mean access time against degree and distance (two-tailed $t$-test of regression coefficient). The p-values for 7 taxa are smaller than the machine precision used to calculate them.}
\csvautotabular{tables/mean-access.csv}
\label{tab:regressionMAT}
\end{table}

Here, to gain more insight, we used simulation to approximate the entire access time distribution.
Again we use the insight that the access time for a pair of trees with a simple random walk does not depend on the actual labeling of those trees, but rather only on their relative labeling.
Thus rather than enumerate access times between trees, which would have required a tremendous amount of memory and computational power to obtain accurate estimates, we enumerate times between pairs of trees in a tanglegram.
To calculate the empirical distributions of access times we aggregate all access times for the same tanglegram using our group-theoretic methods \cite{tangle}.

\begin{table}
\centering
\caption{p-values for ordinary least squares linear multiple regression of rSPR $\delta_1$ against degree, distance, and $\kappa$ (two-tailed $t$-test of regression coefficient).}
\csvautotabular{tables/delta1.csv}
\label{tab:regressionDelta}
\vspace{1em}
\end{table}

We find that the mean access time between trees $T_1$ and $T_2$ is determined by $\degree{T_1}$ and $\degree{T_2}$ (Table~\ref{tab:regressionMAT}).
Furthermore, plotting the distribution of access times between pairs of trees with respect to their distance and curvature hints that smaller $\kappa$ slightly shifts the distribution of access times towards larger access times (Fig.~\ref{fig:short-time-kappa-access}).
We quantify this effect by defining $\delta_1$ to be the difference between the first pair of access time counts such that the second entry in the pair is nonzero.
For example, $\delta_1$ for distance 1 pairs (green lines in Fig.~\ref{fig:kappa-access}) is the count for time 1 minus the count for time 2, while $\delta_1$ for distance 3 pairs (blue lines in Fig.~\ref{fig:kappa-access}) is the count for time 2 minus the count for time 3.
Regression finds a clear influence of $\kappa$ on $\delta_1$ (Table~\ref{tab:regressionDelta}).
This confirms the intuitive interpretation of $\kappa(T_1, T_2)$ as quantifying the propensity of a random walk to go from $T_1$ to $T_2$ relatively directly, certainly before the random walk achieves stationarity.
On the other hand, if the random walk starting from $T_1$ does not quickly arrive at $T_2$ and instead achieves stationarity, the original position of the random walk is forgotten, and the access time is then a standard exponentially distributed waiting time for an event in a Poisson process (Fig.~\ref{fig:long-time-kappa-access}).

\begin{figure*}
	\hspace*{\stretch{1}}
	\includegraphics[width=0.8\textwidth]{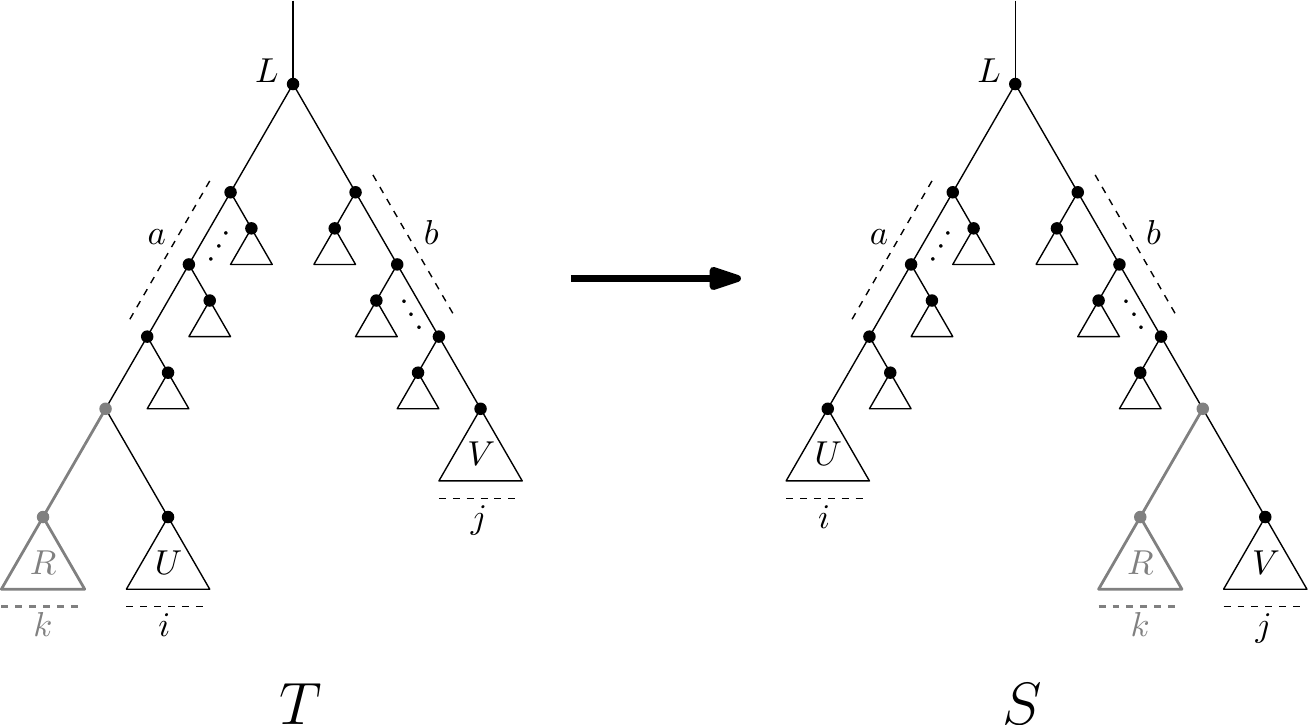}
	\hspace*{\stretch{1}}
	\caption{An rSPR move labelled as in Lemma~\ref{lem:degree_change}.
Moving the grey subtree $R$ from its position adjacent to $U$ in tree $T$ to its position adjacent to $V$ in tree $S$ changes the rSPR degree by $2\left(k(a-b) + i - j\right)$.}
	\label{fig:spr-degree-change}
\end{figure*}

The analysis can be reproduced by invoking the SCons (\url{http://scons.org/}) build tool and running the cells in an IPython notebook; instructions are in the repository README file.

\section{Rooted SPR Neighborhoods}
\label{sec:neighborhoods}
Having made the connection between curvature values and access times on rSPR graphs, we now consider curvature theoretically.
We begin by bounding differences between degrees, and then continue by considering features relevant to the earth mover's distance that we call ``squares" and ``triangles" in the rSPR graph.
Many of our results in this section follow from a characterization of the change in degree and distribution of permissible rSPR moves after an rSPR move is applied.

\begin{lem}[{Song~\cite{Song2003-gf}}]
\cuttable{    \pushQED{\qed}}
	\label{lem:degree_extremes}
	For a tree $T$ with $n$ leaves:
	\begin{enumerate}
		\item $\degree{T} = 3n^2 - 13n + 14$, if $T$ is a ladder tree,
		\item $\degree{T} = 4(n-2)^2 - 2 \sum_{m=1}^{n-2} \floor{\log_2(m+1)}$, if $T$ is a balanced tree, and
		\item  $3n^2 - 13n + 14 \le \degree{T} \le 4(n-2)^2 - 2 \sum_{m=1}^{n-2} \floor{\log_2(m+1)}$, otherwise.
	\end{enumerate}
\cuttable{    \popQED}
\end{lem}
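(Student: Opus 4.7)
The plan is to reduce all three parts to a single identity expressing $\degree{T}$ in terms of the depth-sum $D(T) := \sum_{v \in T} \mathrm{depth}(v)$.

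First, I would apply Lemma~\ref{lem:compute_degree} and simplify. Splitting the sum by depth, the two depth-$1$ children of the root each contribute $2n - x_u - 3$ and every deeper non-root node contributes $2n - x_u - 5$; collecting constants over the $2n - 2$ non-root nodes gives
\[
\degree{T} = 4n^2 - 14n + 14 - \sum_{u:\,\mathrm{depth}(u) \ge 1} x_u.
\]
The double-counting identity $\sum_{u \in T} x_u = \sum_{v \in T}(\mathrm{depth}(v) + 1)$, obtained by counting ancestor--descendant pairs two ways, lets one subtract the root's subtree size $2n - 1$ and recognise the remaining sum as $D(T)$, yielding the key formula
\[
\degree{T} = 4n^2 - 14n + 14 - D(T).
\]

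With this in hand, part (i) is a direct calculation: in the $n$-leaf ladder the $n - 1$ internal nodes have depths $0, 1, \ldots, n - 2$ and the leaves have depths $1, 2, \ldots, n - 2, n - 1, n - 1$, so $D(T) = 2\binom{n-1}{2} + 2(n-1) = n(n-1)$ and $\degree{T} = 3n^2 - 13n + 14$. For part (ii), I would use the companion identity $D(T) = 2\sum_{\text{internal } u}\mathrm{depth}(u) + 2(n-1)$, proved by noting that every non-root node is a child of some internal $u$ at depth $\mathrm{depth}(u) + 1$. In a balanced tree the levels fill greedily, so labelling the $n-1$ internal nodes $1, 2, \ldots, n-1$ in level order places the $m$-th at depth $\floor{\log_2 m}$. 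Substituting and re-indexing $m \mapsto m + 1$ gives $D(T) = 2(n-1) + 2\sum_{m=1}^{n-2}\floor{\log_2(m+1)}$, and plugging this into the key formula yields the stated balanced-tree degree $4(n-2)^2 - 2\sum_{m=1}^{n-2}\floor{\log_2(m+1)}$.

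Part (iii) then reduces to showing that the ladder maximises and the balanced tree minimises $D(T)$ among $n$-leaf rooted binary trees, since $\degree{T}$ reverses the relation. I would argue by induction on $n$ using the decomposition $D(T) = 2(n - 1) + D(L) + D(R)$, where $L, R$ are the two subtrees of the root with $k$ and $n - k$ leaves. For the upper bound the inductive hypothesis gives $D(L) + D(R) \le k(k-1) + (n-k)(n-k-1)$, a convex quadratic in $k$ maximised at $k \in \{1, n-1\}$ with value $(n-1)(n-2)$; the unique extremiser is the ladder, so $D(T) \le n(n-1)$. For the lower bound the minimum of $D(L) + D(R)$ is attained at $k = \floor{n/2}$ by the midpoint-convexity of the optimal-value function (verifiable inductively from the tabulated differences), and unwinding the recursion reproduces the level-order balanced tree. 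The main obstacle is confirming that this greedy recursion matches the closed form $\sum_{m=1}^{n-2}\floor{\log_2(m+1)}$ exactly; this is a careful induction on $n$ tracking how filling the next level of the tree increments the floor-log sum, but it contains no essentially hard step.
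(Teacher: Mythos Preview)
The paper does not prove this lemma; it is attributed to Song and stated without proof. Your proposal is therefore an independent argument, and it is essentially correct, built cleanly on the paper's own Lemma~\ref{lem:compute_degree}. The key identity $\degree{T}=4n^2-14n+14-D(T)$ is derived correctly, and the ladder computation and the companion identity $D(T)=2\sum_{\text{internal }u}\mathrm{depth}(u)+2(n-1)$ are both right.

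Two remarks. First, for part (ii) you assert that ``in a balanced tree the levels fill greedily,'' but the paper \emph{defines} a balanced tree as one minimising the sum of internal-node depths, not by shape. The link is a one-line pigeonhole: at depth $d$ a rooted binary tree has at most $2^d$ nodes, so any depth multiset for the $n-1$ internal nodes is bounded below (after sorting) by $\floor{\log_2 1},\ldots,\floor{\log_2(n-1)}$, and this bound is realisable by filling levels. You should state this explicitly rather than assume it. Second, once you have the companion identity, the lower bound in (iii) is \emph{immediate} from the paper's definition of ``balanced'': the balanced tree minimises the internal-depth sum by definition, hence minimises $D(T)$, hence maximises $\degree{T}$. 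Your recursive midpoint-convexity argument for that direction is correct but unnecessary; the recursion $D(T)=2(n-1)+D(L)+D(R)$ with endpoint maximisation is exactly what is needed for the ladder upper bound, and that part stands as written.
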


We now bound the ratio and difference of rSPR degree between two trees with $n$ leaves.
\begin{restatable}{lem}{degreemaxdelta}
	\label{lem:degree_max_delta}
	Let $T$,$S$ be trees with $n \ge 3$ leaves, and assume w.l.o.g. that $\degree{T} \le \degree{S}$.
	Then:
	\begin{enumerate}
		\item $\frac{\degree{T}}{\degree{S}} \ge 3/4$, and
		\item $\degree{S} - \degree{T} \le n^2 - 5n + 6$.
	\end{enumerate}
\end{restatable}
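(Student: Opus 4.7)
The plan is to prove both statements as direct algebraic consequences of the extremal degree values given by Lemma~\ref{lem:degree_extremes}. Since we assume $\degree{T} \le \degree{S}$, the worst cases for both the ratio and the difference occur when $T$ is minimally connected and $S$ is maximally connected. By Lemma~\ref{lem:degree_extremes}, this means $\degree{T} \ge 3n^2 - 13n + 14$ (the ladder lower bound), and $\degree{S} \le 4(n-2)^2 - 2\sum_{m=1}^{n-2} \floor{\log_2(m+1)}$ (the balanced upper bound). So both parts reduce to algebra on these two closed forms, the only nontrivial element being the $\floor{\log_2}$ sum.

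For part (ii), I would simply subtract the two bounds. Expanding $4(n-2)^2 - (3n^2 - 13n + 14) = n^2 - 3n + 2$, and then subtracting the $\log$ sum, gives
\begin{equation*}
\degree{S} - \degree{T} \le n^2 - 3n + 2 - 2\sum_{m=1}^{n-2}\floor{\log_2(m+1)}.
\end{equation*}
To finish, I would use only the trivial bound $\floor{\log_2(m+1)} \ge 1$ for $m \ge 1$, yielding $\sum_{m=1}^{n-2}\floor{\log_2(m+1)} \ge n-2$ and therefore $\degree{S} - \degree{T} \le n^2 - 5n + 6$, as claimed.

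For part (i), I would rearrange $\degree{T}/\degree{S} \ge 3/4$ as $4\degree{T} \ge 3\degree{S}$, plug in the extremal bounds, and simplify. The resulting inequality reduces to
\begin{equation*}
\sum_{m=1}^{n-2}\floor{\log_2(m+1)} \ge \tfrac{2(n-2)}{3},
\end{equation*}
which again follows from the same crude term-by-term bound $\floor{\log_2(m+1)} \ge 1$, so the sum is at least $n-2 \ge \tfrac{2(n-2)}{3}$ for $n \ge 3$.

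There is no real obstacle here beyond bookkeeping: the whole argument is a direct plug-in into Lemma~\ref{lem:degree_extremes} followed by the weakest possible logarithm bound. The mild point worth flagging is that the $3/4$ constant is in fact tight asymptotically: as $n \to \infty$ the ratio of ladder to balanced degree tends to $3/4$ (since the $\log$ sum is $o(n^2)$), so no better constant can be recovered by this approach, which is consistent with the statement.
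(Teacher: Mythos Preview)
Your proposal is correct and follows essentially the same route as the paper's proof: both arguments plug in the ladder lower bound and balanced upper bound from Lemma~\ref{lem:degree_extremes} and then use the crude estimate $\sum_{m=1}^{n-2}\floor{\log_2(m+1)} \ge n-2$ to finish the algebra. Your presentation is in fact a bit cleaner---cross-multiplying $4\degree{T} \ge 3\degree{S}$ for part~(i) avoids the somewhat awkward chain of fraction manipulations the paper uses---but the content is identical.
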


We can improve these bounds in the case of adjacent trees.
To do so, we require the following lemma that characterizes how the degree of a tree changes after an rSPR operation.
See Figure~\ref{fig:spr-degree-change} for an illustration.

\begin{restatable}{lem}{degreechange}
\label{lem:degree_change}
Let $T$ and $S$ be trees such that $S$ can be obtained from $T$ by moving a subtree $R$ with $k$ leaves from its position adjacent to subtree $U$ to a location adjacent to subtree $V$.
Let $L$ be the $\lca{U,V}$ in $T$.
Let $a$ be the number of intermediate nodes on the path from the parent of $R$ to $L$ in $T$, excluding endpoints.
Similarly, let $b$ be the number of intermediate nodes on the path from $V$ to $L$ in $T$, excluding endpoints.
Let $i$ be the number of leaves in $U$ and $j$ be the number of leaves in $V$, excluding any leaves of $R$.
Then the degrees of $T$ and $S$ differ by:
$$2\left(k(a-b) + i - j\right).$$

\end{restatable}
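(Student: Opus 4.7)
The strategy is to apply Lemma~\ref{lem:compute_degree}, which writes $\degree{T} = \sum_{u} |N(T,u)|$ with each term depending only on the subtree size $x_u$ and on whether $u$ has depth $>1$, depth $1$, or depth $0$. I would compute $\degree{S}-\degree{T}$ by pairing up the nodes of the two trees under the natural identification that matches the suppressed node $p_R \in V(T)$ with the newly inserted node $p'_R \in V(S)$; every other node of $T$ has a canonical counterpart in $S$. The degree difference then reduces to a sum over only those nodes whose subtree sizes or depths actually shift.

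First I would pin down which nodes have a changed subtree size. Nodes strictly inside $R$ and nodes at or above $L$ all have the same subtree size in $T$ and $S$, because $L$'s subtree simply swaps $p_R$ for $p'_R$ and everything above it contains the same leaf set; likewise, sibling subtrees hanging off the $R$--$V$ corridor are untouched. On the branch descending from $L$ toward $R$, each of the $a$ intermediate nodes $w_1,\ldots,w_a$ loses both $R$ and the now-suppressed $p_R$, so $x_{w_i}$ decreases by $(2k-1)+1=2k$. Symmetrically, on the branch descending from $L$ toward $V$, each of the $b$ intermediate nodes $z_1,\ldots,z_b$ gains $R$ together with the inserted $p'_R$, so $x_{z_i}$ increases by $2k$. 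Finally, the paired node $p_R \leftrightarrow p'_R$ carries a different sibling: $x_{p_R}^T = (2k-1)+(2i-1)+1$ while $x_{p'_R}^S = (2k-1)+(2j-1)+1$, a net shift of $2(j-i)$.

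Since $|N(S,u)|-|N(T,u)| = x_u^T - x_u^S$ whenever $u$ has depth $>1$ in both trees, summing the three contributions yields
\[
\degree{S}-\degree{T} \;=\; 2k\,a \;-\; 2k\,b \;+\; 2(i-j) \;=\; 2\bigl(k(a-b)+i-j\bigr),
\]
which is the claimed identity. The main technical nuisance is the handful of boundary configurations in which the depth-$1$ threshold is crossed during the move: $U$'s depth drops by one (because $p_R$ is suppressed), $V$'s depth rises by one (because $p'_R$ is inserted above it), and the depth of $p_R/p'_R$ shifts by $b-a$. Checking when each of these lands on or leaves depth $1$ shows the only affected configurations are those with $L$ at the root and $a=0$ or $b=0$. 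The payoff of a brief case analysis is that the extra $\pm 2$ corrections from the depth-dependent branch of the formula always come in cancelling pairs---for instance, when $L$ is the root and $a=0$ with $b\ge 1$, the $+2$ contributed by $U$ descending to depth $1$ is exactly cancelled by the $-2$ contributed by $p_R/p'_R$ leaving depth $1$---so the generic-case formula survives unchanged.
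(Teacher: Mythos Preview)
Your argument is correct and proceeds by a genuinely different route than the paper. The paper argues move-by-move: it lists four classes of rSPR operations that are gained or lost by the relocation of $R$---ancestors on the $U$-side gaining access to $R$'s edges, ancestors on the $V$-side losing that access, and the parent subtree of $R$ swapping access between $U$ and $V$---and then asserts these four classes are exhaustive, tallying $2ka - 2kb + 2(i-1) - 2(j-1)$. You instead leverage Lemma~\ref{lem:compute_degree} and compute $\degree{S}-\degree{T}$ as a sum of per-node differences $x_u^T - x_u^S$, identifying exactly which nodes have their subtree size perturbed and by how much. Your approach has the advantage of being fully grounded in an already-proven formula, so nothing is asserted without justification; the price is the depth-$1$ boundary bookkeeping, which the paper's move-counting sidesteps entirely. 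Your cancellation check for those boundary cases is right: when $L$ is not the root every relevant node sits at depth at least $2$ in both trees, and when $L$ is the root with $a=0$ or $b=0$ the $\pm 2$ corrections from $U$, $V$, and $p_R/p'_R$ pair off as you describe. Conversely, the paper's argument is shorter and more geometric, but its ``no additional moves are introduced or blocked'' line is doing real work that is only implicitly justified by Lemma~\ref{lem:compute_degree}.
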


Moreover, we can use these ideas to determine the number of rSPR moves that are, in some respects, independent of a given rSPR move.
That is, for two trees $S$ and $T$ differing by a single rSPR move, we wish to know the number of rSPR moves that are applicable to both trees rather than unique to one of the trees.
To formalize this concept, consider pairs of trees $T' \in N(T)$ and $S' \in S(T)$ such that $\dspr{T',S'} = 1$.
The number of such ``squares'' involving two adjacent trees will play a key role in our curvature bounds, as they push the curvature of those trees towards 0.

\begin{restatable}{cor}{pairedneighbors}
\label{cor:paired_neighbors}
Continuing with the setting and notation in Lemma~\ref{lem:degree_change}, at least
$$\overlap := \deg(T) - 2kb - 2(j-1) = \deg(S) - 2ka - 2(i-1)$$
trees in the neighborhood of $T$ can be paired with $o$ trees in the neighborhood of $S$ such that the pairings are disjoint and $\dspr{T',S'} = 1$ for each $(T',S')$ pair.
\end{restatable}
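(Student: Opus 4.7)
The plan is to construct, via SPR moves compatible with the $T \to S$ operation, an injective pairing between a subset of $N(T)$ of size at least $\overlap$ and a subset of $N(S)$, such that each pair is at rSPR distance exactly 1. The guiding principle is that if an SPR move $\mu$ is structurally disjoint from the $R$-from-$U$-to-$V$ operation, or transports $R$'s rearrangement intact as a sub-move, then $\mu(T)$ and its $S$-analogue $\mu_S(S)$ still differ by a single SPR move---namely, the relocation of $R$ from its $U$-adjacent position to its $V$-adjacent position (possibly now occurring inside the moved subtree).

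Using the canonical enumeration of Lemma~\ref{lem:compute_degree} to parametrize $N(T)$ as (moved subtree, destination edge) pairs, I would label each move as \emph{compatible} or \emph{incompatible} according to whether it satisfies $\dspr{\mu(T), \mu_S(S)} = 1$. Compatible moves immediately yield the desired pair and are distinct by the canonical enumeration's injectivity. The incompatible moves fall into two groups: those whose destination interferes with the modified region around $V$'s attachment, and those that relocate a subtree onto the $b$ intermediate edges between $L$ and $V$ in a manner that interacts with $R$'s movement. A careful case analysis---taking into account the identifications (rules 1--4 preceding Lemma~\ref{lem:compute_degree}) to avoid double-counting---bounds the first group by $2(j-1)$ (essentially, the $2j-2$ edges internal to $V$'s subtree) and the second group by $2kb$ (the $k$-leaf subtree $R$ interacting with the $b$ intermediate edges). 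Hence at least $\degree{T} - 2kb - 2(j-1)$ compatible moves remain.

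Finally, the equality $\degree{T} - 2kb - 2(j-1) = \degree{S} - 2ka - 2(i-1)$ is an immediate consequence of Lemma~\ref{lem:degree_change}, which gives $\degree{S} - \degree{T} = 2(k(a-b) + i - j)$; substituting this and rearranging confirms that the two symmetric forms agree. The main obstacle is the clean accounting of incompatible moves: because the canonical enumeration of SPR moves encodes several equivalences, one must methodically verify each move's status to avoid overcounting or omitting cases, and a handful of degenerate configurations (such as $a=0$, $b=0$, $L=\rho$, or $R$ sitting at depth 1, in which Lemma~\ref{lem:compute_degree}'s formula branches) require separate verification before the bound can be asserted cleanly.
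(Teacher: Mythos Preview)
Your approach is essentially the paper's: identify the rSPR moves on $T$ that survive unchanged as moves on $S$, and observe that applying such a move to both trees leaves them one $R$-move apart. The paper's proof is much terser---it simply invokes the classification already carried out in the proof of Lemma~\ref{lem:degree_change}, noting that exactly $\overlap$ moves have the same source and target in both $T$ and $S$, so no fresh case analysis is needed.

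One small point: your informal descriptions of the two incompatible classes are slightly misdirected. The $2kb$ class is not ``$R$ interacting with the $b$ intermediate edges''; it is the $b$ subtrees rooted along the $V$-to-$L$ path, each of which in $T$ can be regrafted onto any of the $2k$ edges of $R$ (and its parent edge), but in $S$ now \emph{contains} $R$ and so loses those destinations. Likewise the $2(j-1)$ class is not generic moves whose destination lies near $V$; it is specifically the subtree rooted at $R$'s parent in $T$ (containing $R$ and $U$) being regrafted onto the $2(j-1)$ internal edges of $V$---a subtree that no longer exists in $S$. Aligning your two groups with items~(ii) and~(iv) of the proof of Lemma~\ref{lem:degree_change} gives the count directly and spares you the edge-case bookkeeping you flag at the end.
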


We can now use Lemma~\ref{lem:degree_change} to improve the bounds in Lemma~\ref{lem:degree_max_delta} for two adjacent trees.
\begin{restatable}{lem}{degreemaxdeltaadjacent}
	\label{lem:degree_max_delta_adjacent}
	Let $T$, $S$ be trees with $n \ge 3$ leaves, s.t. $\degree{T} \le \degree{S}$ and $\dspr{T,S} = 1$.
	Then:
	\begin{enumerate}
		\item $\degree{S} - \degree{T} \le 2\floor{\frac{n-2}{2}}\ceil{\frac{n-2}{2}} \le \frac{1}{2} (n-2)^2 $,
		\item $\frac{\degree{T}}{\degree{S}} \ge \frac{5}{6}$, $\forall n \ge 4$, and
		\item $\lim_{n\rightarrow\infty}\frac{\degree{T}}{\degree{S}} =  \frac{6}{7}$.
	\end{enumerate}
\end{restatable}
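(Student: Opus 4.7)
The plan is to apply Lemma~\ref{lem:degree_change} to write $\degree{S}-\degree{T} = 2\bigl(k(a-b) + (i-j)\bigr)$, maximize this quantity via a leaf-counting optimization, and then combine the resulting bound with the ladder lower bound of Lemma~\ref{lem:degree_extremes} to deduce the ratio statements.

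For part (i), assume WLOG $\degree{T}\le\degree{S}$, so $k(a-b)+(i-j)\ge 0$. Leaf counting gives the key constraint
\[
k + i + j + a + b \;\le\; n,
\]
since $R$, $U$, $V$ are pairwise disjoint (contributing $k+i+j$ leaves) and each of the $a+b$ intermediate nodes carries an off-path subtree with at least one additional leaf. Both $b$ and $j$ enter the objective with negative sign and the constraint with positive sign, so the optimum is attained at $b=0$ and $j=1$, reducing the problem to maximizing $ka + (i-1)$ subject to $k + a + i \le n - 1$. For fixed $s = k+a$, $ka$ is at most $\floor{s/2}\ceil{s/2}$, and comparing the increment of $\floor{s/2}\ceil{s/2}$ by one against the unit loss from decreasing $i$ shows the overall objective is non-decreasing in $s$ for $s \ge 2$. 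Thus the optimum sits at $i = 1$ and $s = n-2$, giving
\[
k(a-b)+i-j \;\le\; \floor{\tfrac{n-2}{2}}\ceil{\tfrac{n-2}{2}}.
\]
Doubling gives the sharp bound, and the coarser bound $\tfrac{1}{2}(n-2)^2$ follows from $\floor{\tfrac{n-2}{2}}\ceil{\tfrac{n-2}{2}} \le \bigl(\tfrac{n-2}{2}\bigr)^2$.

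For parts (ii) and (iii), combining (i) with $\degree{T} \ge 3n^2 - 13n + 14$ from Lemma~\ref{lem:degree_extremes} yields
\[
\frac{\degree{T}}{\degree{S}}
\;\ge\; \frac{3n^2 - 13n + 14}{3n^2 - 13n + 14 + \tfrac{1}{2}(n-2)^2}.
\]
After clearing denominators, this is $\ge 5/6$ exactly when $(n-2)(n-4) \ge 0$, proving (ii) for $n \ge 4$. The same expression tends to $6/7$ as $n \to \infty$, supplying the lower bound in (iii). For matching achievability, I would take $T_n$ to be a ladder tree (so $\degree{T_n} = 3n^2 - 13n + 14$) and let $S_n$ be obtained by pruning the $\floor{(n-2)/2}$-leaf subtree at one end of the ladder and regrafting it adjacent to the leaf at the opposite end. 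A direct check shows this rSPR move has $i = j = 1$, $b = 0$, and $k + a = n - 2$, so Lemma~\ref{lem:degree_change} gives $\degree{S_n} - \degree{T_n} = 2\floor{(n-2)/2}\ceil{(n-2)/2}$, and $\degree{T_n}/\degree{S_n} \to 6/7$.

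The main obstacle is justifying the leaf-counting constraint carefully in light of the artificial root $\rho$: one must confirm that the optimum is not achieved by a ``regraft at the root'' move (for which $V$ effectively contains all $n-k$ non-$R$ leaves, forcing $j = n-k$ and shrinking the objective) and that the extremal configuration with $L$ at the topmost non-$\rho$ internal node is realizable as a valid rSPR move. Once these technicalities are settled, the remaining combinatorial optimization and the ratio calculations are routine.
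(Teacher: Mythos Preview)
Your proof takes essentially the same approach as the paper: both apply Lemma~\ref{lem:degree_change}, set $b=0$ and $i=1$, then balance $k$ and $a$ to maximize the product $ka$; both then combine (i) with the ladder lower bound from Lemma~\ref{lem:degree_extremes} to obtain (ii) and (iii). Your leaf-counting constraint $k+i+j+a+b \le n$ is a cleaner packaging of the paper's more informal ``place $L$ at the root'' argument, and you correctly flag the regraft-at-root case as a separate technicality---the paper treats that case explicitly and shows it yields the \emph{same} bound $2\floor{(n-2)/2}\ceil{(n-2)/2}$, not a strictly smaller one, so your intuition that large $j$ ``shrinks the objective'' should be replaced by a direct check that it does not exceed the generic bound. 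Your explicit achievability construction for (iii) is in fact more than the paper provides; the paper only derives the ratio lower bound and asserts the limit.
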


Next, we bound the number of neighbors shared by two adjacent trees.
The number of such ``triangles'' involving two adjacent trees has a key role in determining whether their curvature is positive or negative.

\begin{restatable}{lem}{sharedneighbors}
	\label{lem:shared_neighbors}
Let $T$ and $S$ be trees such that $\dspr{T,S} = 1$.
Then $\size{N(T) \cap N(S)} \le 6n - 17$.
\end{restatable}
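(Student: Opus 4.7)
The plan is to adopt the setup of Lemma~\ref{lem:degree_change}: $S$ is obtained from $T$ by pruning a subtree $R$ with $x$ nodes and $k$ leaves from beside its $T$-sibling $U$ and regrafting it beside its $S$-sibling $V$; let $L = \lca[T]{U,V}$, let $a$ be the number of intermediate nodes on the $T$-path from $R$'s parent to $L$, and let $b$ be the number of intermediate nodes from $V$ to $L$. Since Lemma~\ref{lem:compute_degree} partitions $N(T)$ as a disjoint union indexed by the pruned subtree, each common neighbor $W \in N(T) \cap N(S)$ can be classified by the subtree $u$ moved in its canonical SPR step $T \to W$.

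For $u = R$: any SPR move of $R$ in $T$ to a regraft position $C$ distinct from $R$'s $T$-position and $S$-position yields a tree $W$ that is equivalently obtained from $S$ by moving $R$ from $B$ to $C$, so $W \in N(T) \cap N(S)$. By Lemma~\ref{lem:compute_degree}, $\size{N(T,R)} \le 2n - x - 3$, so after excluding the single move producing $S$ this case contributes at most $2n - x - 4 \le 2n - 5$ common neighbors.

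For $u \ne R$: since $R$ is not pruned, $W$ retains $R$ at its $T$-position, up to a local suppression or insertion if $u$ is adjacent to $R$. For $W$ to also lie in $N(S)$, the single SPR step $S \to W$ must exactly reconcile $R$'s differing positions in $T$ and $S$. This forces $u$ to be structurally close to $R$'s trajectory between its $T$- and $S$-positions: specifically $u$ must equal $U$, equal $V$, or hang off the $T$-path from $R$'s parent through $L$ to $V$. Any $u$ disjoint from this trajectory leaves $R$ at its $T$-position in $W$ while $S$ has $R$ at $B$, forcing $\dspr{W,S} = 2$ rather than $1$. For each of the $\OhOf{a+b}$ structurally relevant choices of $u$, a direct case analysis will establish that only a bounded number of regraft destinations yield a tree in $N(S)$, and careful bookkeeping will show that this case contributes at most $4n - 12$ common neighbors.

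Combining both cases yields $\size{N(T) \cap N(S)} \le (2n - 5) + (4n - 12) = 6n - 17$. The main obstacle is the case analysis for $u \ne R$: I must (i) rigorously verify that moves of subtrees disjoint from $R$'s trajectory produce trees at $\dspr$-distance exactly $2$ from $S$, and (ii) for each structurally relevant $u$, enumerate the exact set of destinations yielding a tree also in $N(S)$ so the constant $-12$ is achieved. The subtlest subcases arise when $T \to W$ and $S \to W$ prune different subtrees---the two moves must then ``conspire'' to produce the same $W$, imposing a commutation-like constraint that must be verified structurally.
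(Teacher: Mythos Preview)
Your decomposition---split by whether the canonical $T\to W$ move prunes $R$ or some other subtree---differs from the paper's argument. The paper instead observes that a common neighbor must resolve an \emph{incompatible triple}: three subtrees whose relative arrangement differs between $T$ and $S$. It then bounds the common neighbors by the moves of these three subtrees, each assigned at most $2n-6$ canonical moves by Lemma~\ref{lem:compute_degree}, plus one extra move of the aunt of their LCA, giving $3(2n-6)+1=6n-17$ directly.

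Your $u=R$ bound of $2n-5$ is fine. The real gap is in the $u\ne R$ case: your premise that each structurally relevant $u$ admits only a \emph{bounded} (constant) number of regraft destinations landing in $N(S)$ is false exactly at the extremal configuration. Take the paper's tight example where $T$ and $S$ are ladders with bottom triples $((1,2),3)$ and $((2,3),1)$, so $R=\{3\}$ and $a+b=0$. Deleting leaf $1$ from $T$ and from $S$ yields the \emph{same} residual ladder, and likewise for leaf $2$; hence \emph{every} one of the $2n-6$ canonical moves of leaf $1$ (and of leaf $2$) produces a tree in $N(T)\cap N(S)$. Two of your ``structurally relevant'' nodes thus each contribute $\Theta(n)$ common neighbors, not $O(1)$, and together already exhaust the entire $4n-12$ budget. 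What you are missing is precisely the triple phenomenon the paper exploits: an rSPR move can permute three subtrees so that deleting \emph{any one} of them---not only $R$---collapses the difference between $T$ and $S$. Your plan is salvageable if you first isolate those (at most two) non-$R$ subtrees whose deletion reconciles $T$ with $S$ and bound them by $2n-6$ each, but as written the bookkeeping rests on a false ``bounded destinations per $u$'' premise.
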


\section{Curvature}
We now consider properties of the uniform (a.k.a.\ isotropic) random walk on the $n$-leaf rSPR graph.
Recall that the uniform random walk begins at a tree $T$ and moves to a tree uniformly at random from $N(T)$.
Recall that the coarse uniform random walk curvature between two trees $T$ and $S$ is $\curvature{T, S} := 1 - \frac{W_1(m_T, m_S)}{d(T, S)}$, where $W_{1,n}$ is the mass transport term \eqref{eq:curvatureDef}.
For the uniform random walk, $m_T$ is the probability measure assigning a mass of $\frac{1}{\degree{T}}$ to each of $T$'s neighbors.
Our results follow from the lemmas of Section~\ref{sec:neighborhoods}.

\begin{restatable}{thm}{flatcurvature}
\label{thm:flat_curvature}
Fix a positive integer $k$ and let $R$ be a tree with $k$ leaves.
Let $\{T_n \mid n > k\}$ be a sequence of trees all containing $R$, and let $\{S_n \mid n > k\}$ be the same sequence $T_n$ but with $R$ cut off and attached at a different location.
Then $\lim_{n \rightarrow \infty} \curvature{T_n,S_n} = 0$ for the uniform random walk on the rSPR graph.
\end{restatable}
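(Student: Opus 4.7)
The plan is to use $\dspr{T_n, S_n} = 1$ to rewrite the goal as $W_1(m_{T_n}, m_{S_n}) \to 1$, then sandwich the transportation distance between $1 - O(1/n)$ and $1 + O(1/n)$. The key quantitative inputs are Lemma~\ref{lem:degree_extremes}, which forces $\degree{T_n}$ and $\degree{S_n}$ to both be $\Theta(n^2)$, together with the observation that since $k$ is held fixed, every exceptional quantity appearing in Corollary~\ref{cor:paired_neighbors} and Lemma~\ref{lem:shared_neighbors} grows only as $O(n)$.

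For the upper bound I would construct an explicit transport plan. Assume without loss of generality that $\degree{T_n} \le \degree{S_n}$. Corollary~\ref{cor:paired_neighbors} supplies $\gamma \ge \degree{S_n} - 2ka - 2(i-1)$ disjoint pairs $(T', S')$ with $T' \in N(T_n)$, $S' \in N(S_n)$, and $\dspr{T', S'} = 1$. Send $1/\degree{S_n}$ units of mass from each paired $T'$ to its partner $S'$ at unit cost, then route the residual mass (total $1 - \gamma/\degree{S_n}$) from the remaining $T$-side neighbors to the unpaired elements of $N(S_n)$ at cost at most $3$ per unit, using $\dspr{T', S'} \le \dspr{T', T_n} + \dspr{T_n, S_n} + \dspr{S_n, S'} = 3$. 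The resulting bound is $W_1 \le 3 - 2\gamma/\degree{S_n}$, and since $\degree{S_n} - \gamma = 2ka + 2(i-1) = O(n)$ while $\degree{S_n} = \Theta(n^2)$, this simplifies to $W_1 \le 1 + O(1/n)$.

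For the lower bound I would apply Kantorovich-Rubinstein duality with the $1$-Lipschitz test function $f(X) := \dspr{X, T_n}$. Every neighbor of $T_n$ satisfies $f = 1$, so $E_{m_{T_n}}[f] = 1$. For $S' \in N(S_n)$, $f(S') \in \set{0, 1, 2}$, with $f(S') = 0$ only at $S' = T_n$, $f(S') = 1$ exactly on the shared neighbors $N(T_n) \cap N(S_n)$, and $f(S') = 2$ otherwise. Writing $s := |N(T_n) \cap N(S_n)|$ gives $E_{m_{S_n}}[f] = 2 - (s+2)/\degree{S_n}$, and by Lemma~\ref{lem:shared_neighbors} we have $s \le 6n - 17$. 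Therefore $W_1 \ge E_{m_{S_n}}[f] - E_{m_{T_n}}[f] = 1 - (s+2)/\degree{S_n} = 1 - O(1/n)$. Combining the two bounds forces $W_1 \to 1$ and hence $\curvature{T_n, S_n} \to 0$.

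The main obstacle is bookkeeping in the upper-bound transport plan: the supports of $m_{T_n}$ and $m_{S_n}$ have slightly different sizes, so the paired masses do not cancel perfectly. Framing the plan in terms of the larger of the two degrees and absorbing the leftover mass at cost at most $3$ keeps the calculation transparent, and notably requires no detailed knowledge of the specific attachment location used to form $S_n$ beyond the fixed leaf count $k$ of $R$ — which is precisely why the limit is genuinely flat.
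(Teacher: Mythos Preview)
Your proposal is correct and tracks the paper's proof closely: the upper bound via the $\gamma$ paired neighbors from Corollary~\ref{cor:paired_neighbors} plus a cost-$3$ catch-all for residual mass, and the lower bound via the $O(n)$ shared neighbors from Lemma~\ref{lem:shared_neighbors}, are exactly what the paper does. The one genuine methodological difference is your lower bound: the paper argues directly on transport plans (any mass not sitting in a shared neighbor must travel at least distance $1$), whereas you invoke Kantorovich--Rubinstein duality with the $1$-Lipschitz potential $f(X)=\dspr{X,T_n}$. Your route is slightly more explicit and yields the constant $(s+2)/\degree{S_n}$ in closed form, while the paper's direct argument is more informal but avoids appealing to duality; both give $W_1 \ge 1 - O(1/n)$ and the distinction is cosmetic for the purposes of the limit.
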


Next we note a simple and rough bound on the curvature of two trees with respect to their distance, then obtain a tighter bound on the maximum curvature of two adjacent trees.

\begin{restatable}{lem}{curvaturedistancebound}
	\label{lem:curvature_distance_bound}
	Let $T$ and $S$ be two trees. Then:
	$$ \frac{-2}{\dspr{T,S}} \le \curvature{T,S} \le \frac{2}{\dspr{T,S}} .$$
\end{restatable}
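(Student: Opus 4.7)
The plan is to reduce the claim to a bound on $W_1(m_T, m_S)$ and then apply the triangle inequality pointwise inside any transport plan. Unpacking the definition $\curvature{T,S} = 1 - W_1(m_T,m_S)/\dspr{T,S}$, the two inequalities to prove are equivalent to
\[
\dspr{T,S} - 2 \;\le\; W_1(m_T, m_S) \;\le\; \dspr{T,S} + 2.
\]
So the whole task is to bound the Wasserstein distance in terms of the SPR distance of the basepoints.

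First I would establish the pointwise bound $\size{\dspr{T',S'} - \dspr{T,S}} \le 2$ for every pair $(T', S')$ with $T' \in N(T)$ and $S' \in N(S)$. Since each uniform-walk neighbor is at SPR distance exactly $1$ from its basepoint, two applications of the triangle inequality give both
\[
\dspr{T',S'} \le \dspr{T', T} + \dspr{T, S} + \dspr{S, S'} = \dspr{T,S} + 2
\]
and $\dspr{T',S'} \ge \dspr{T,S} - \dspr{T,T'} - \dspr{S,S'} = \dspr{T,S} - 2$.

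Next I would lift this pointwise bound to the mass-transport term. For the upper bound on $W_1$, it is enough to exhibit any coupling $\xi \in \Pi(m_T, m_S)$ (for instance the product measure $m_T \otimes m_S$) and observe that
\[
W_1(m_T, m_S) \le \sum_{T', S'} \dspr{T', S'}\, \xi(T', S') \le \dspr{T,S} + 2,
\]
since every summand is at most $\dspr{T,S} + 2$ and $\xi$ has total mass one. For the lower bound, the same pointwise inequality shows that for \emph{any} coupling $\xi$, every summand is at least $\dspr{T,S} - 2$, hence $W_1(m_T, m_S) \ge \dspr{T,S} - 2$. Dividing by $\dspr{T,S}$ and subtracting from $1$ yields the stated bounds on $\curvature{T,S}$.

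There is no real obstacle here: the lemma is essentially a triangle-inequality statement pushed through the definition of Wasserstein distance, and does not use any of the SPR-specific combinatorics developed in Section~\ref{sec:neighborhoods}. The only subtlety worth flagging is that the argument uses nothing about the uniform walk beyond the fact that $m_T$ and $m_S$ are supported on $N(T)$ and $N(S)$ respectively; consequently the same bound holds verbatim for the MH walk and, more generally, for any random walk whose one-step kernel is supported on immediate graph neighbors.
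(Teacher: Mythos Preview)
Your proof is correct and follows essentially the same approach as the paper: both arguments hinge on the pointwise triangle-inequality bound $\dspr{T,S}-2 \le \dspr{T',S'} \le \dspr{T,S}+2$ for $T'\in N(T)$, $S'\in N(S)$, and then feed this directly into the definition of $W_1$ and $\kappa$. You have simply made explicit the coupling step that the paper leaves implicit. One small inaccuracy in your closing remark: the MH one-step kernel is not supported on $N(T)$ alone but on $N(T)\cup\{T\}$ (rejected proposals stay put); the bound still holds because that support is contained in the closed $1$-ball around $T$, which is all the triangle-inequality argument actually needs.
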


\begin{restatable}{lem}{maxadjacentcurvature}
	\label{lem:max_adjacent_curvature}
	The maximum curvature between two adjacent trees with $n$ leaves is $\frac{6n-17}{3n^2-13n+14}$.
\end{restatable}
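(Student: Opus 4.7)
The plan is to apply the curvature definition directly: since $T$ and $S$ are adjacent, $d(T,S) = 1$, so $\curvature{T,S} = 1 - W_1(m_T, m_S)$, and it suffices to produce a matching lower bound on $W_1(m_T, m_S)$ together with an achievability construction.

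For the upper bound on $\curvature{T,S}$, I would use the standard ``diagonal-matching'' lower bound on transportation cost: in any coupling $\xi$ with marginals $m_T$ and $m_S$, the mass that can be left in place is at most $\sum_v \min(m_T(v), m_S(v))$, and every remaining unit of mass contributes at least $1$ to the cost. The supports of $m_T$ and $m_S$ are $N(T)$ and $N(S)$ respectively, and since $T \in N(S) \setminus N(T)$ and $S \in N(T) \setminus N(S)$, the intersection of supports is precisely $N(T) \cap N(S)$. On each shared neighbor the two masses are $1/\degree{T}$ and $1/\degree{S}$, whose minimum is $1/\max(\degree{T},\degree{S})$, giving
$$W_1(m_T, m_S) \;\ge\; 1 - \frac{\size{N(T) \cap N(S)}}{\max(\degree{T},\degree{S})}.$$
Now invoking Lemma~\ref{lem:shared_neighbors} for the numerator ($\size{N(T) \cap N(S)} \le 6n-17$) and Lemma~\ref{lem:degree_extremes}\,(i) for the denominator (every $n$-leaf tree has degree at least $3n^2 - 13n + 14$, hence so does $\max(\degree{T},\degree{S})$), this yields $\curvature{T,S} \le (6n-17)/(3n^2-13n+14)$.

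For achievability, I would exhibit a pair of adjacent ladder trees $T$ and $S$ that simultaneously saturate both bounds: both have degree exactly $3n^2 - 13n + 14$ by Lemma~\ref{lem:degree_extremes}\,(i), and the rSPR move connecting them is chosen to realize the extremal configuration in the proof of Lemma~\ref{lem:shared_neighbors}, so that $\size{N(T) \cap N(S)} = 6n - 17$. I would then construct an explicit coupling of cost exactly $1 - (6n-17)/(3n^2-13n+14)$ by: (a) matching the $6n-17$ shared neighbors pairwise at distance $0$; (b) pairing the mass $m_T(S) = 1/\degree{T}$ with $m_S(T) = 1/\degree{S}$, which sits at distance $1$ via the edge $\set{T,S}$; and (c) matching the remaining unshared neighbors in $N(T) \setminus N(S)$ with those in $N(S) \setminus N(T)$ via rSPR-adjacencies, each at distance $1$.

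The main obstacle is step (c) in the achievability argument: verifying that the residual mass on $N(T) \setminus (N(S) \cup \set{S})$ can be transported to $N(S) \setminus (N(T) \cup \set{T})$ entirely at distance $1$, which is a fractional matching problem in the rSPR graph restricted to the chosen ladder-tree example. The upper-bound portion (Steps 1--2) is essentially a direct combination of the neighborhood bounds proved earlier, so the bulk of the work lies in constructing and analyzing the extremal pair of ladder trees and its distance-$1$ transport plan.
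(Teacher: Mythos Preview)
Your upper-bound argument is essentially the paper's: both lower-bound $W_1$ by $1 - \size{N(T)\cap N(S)}/\max(\degree{T},\degree{S})$ via the diagonal-matching observation, then plug in Lemma~\ref{lem:shared_neighbors} for the numerator and Lemma~\ref{lem:degree_extremes} for the denominator.

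On achievability your proposal is actually \emph{more} careful than the paper. The paper's proof simply asserts that the maximum ``occurs when their neighborhoods have maximum overlap and all other tree pairs are at distance 1,'' then picks the extremal ladder pair from the tightness remark in Lemma~\ref{lem:shared_neighbors} and stops; it does not explicitly construct the coupling or verify your step~(c). Indeed, the sentence immediately following the lemma in the main text states only that tightness ``has been verified computationally for $n \le 7$.'' So the obstacle you flag --- that the residual unshared neighbors on each side can genuinely be matched at rSPR distance exactly $1$ --- is real, but the paper does not resolve it either; it treats the first sentence of its proof as an assertion and relies on computation for tightness at small $n$. If you want to match the paper, you may simply state the ladder-pair construction and note computational verification; if you want to go beyond it, step~(c) is where the additional work would lie (one natural route is to use the explicit ``square'' pairings of Corollary~\ref{cor:paired_neighbors}, checking that for the specific ladder pair with the three bottom leaves rearranged, the $\overlap$-pairing together with the shared-neighbor overlap already covers everything at distance $\le 1$).
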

This bound is tight and has been verified computationally for $n \le 7$.

It is more difficult to obtain a closer bound on the maximum curvature of nonadjacent trees.
Lemma~\ref{lem:curvature_distance_bound} suggests that more distant pairs of trees should have smaller curvatures than close trees as neighborhood effects decrease with respect to the increasing distance.
However, our experiments with $n \le 7$ suggest that maximum curvature tends to increase with distance (with respect to a fixed $n$), as a far greater fraction of the neighbors approach each other as the distance increases.
Indeed, for $5 \le n \le 7$ the maximum curvature is obtained by pairs of trees at one less than the maximum distance.
Moreover, nearly all of the neighbors of these pairs approach each either.
We thus conjecture the following:
\begin{conj}
	Let $k_n$ be the maximum curvature between two trees with $n$-leaves.
	Then:
	\begin{enumerate}
		\item $k_n \le \frac{2}{\Delta_{\text{rSPR}}(n)-1}$, and
		\item $ k_n \sim \frac{2}{\Delta_{\text{rSPR}}(n)-1}$.
	\end{enumerate}
\end{conj}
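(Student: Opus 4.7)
The plan is to split on $d := \dspr{T,S}$ for an arbitrary pair of $n$-leaf trees $T,S$ and to bound $W_1(m_T, m_S)$ tightly in each regime. Write $\Delta := \Delta_{\text{rSPR}}(n)$. Two extremes of the distance spectrum are straightforward: when $d \ge \Delta - 1$, Lemma~\ref{lem:curvature_distance_bound} directly gives $\curvature{T,S} \le 2/d \le 2/(\Delta - 1)$; and when $d = 1$, Lemma~\ref{lem:max_adjacent_curvature} supplies the explicit bound $(6n - 17)/(3n^2 - 13n + 14) = (2/n)(1 + \OhOf{1/n})$, which by an elementary comparison against $2/(\Delta - 1) = (2/n)(1 + \OhOf{1/\sqrt n})$ (using $\Delta = n - \Theta(\sqrt n)$ from the introduction) lies below $2/(\Delta - 1)$ for all sufficiently large $n$, with small-$n$ cases verified computationally via the tools of Section~\ref{sec:computing_treespace}.

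The substantive work is the intermediate regime $2 \le d \le \Delta - 2$. Here the plan is to refine the triangle-inequality bound on $W_1$ by grouping neighbors according to their distance to the opposite tree. Define $N_{-1}(T) := \set{T' \in N(T) : \dspr{T',S} = d-1}$, and analogously $N_0(T)$, $N_{+1}(T)$, and the symmetric partition of $N(S)$ by distance to $T$; set $\alpha_i := \size{N_i(T)}/\degree{T}$ and $\beta_j := \size{N_j(S)}/\degree{S}$. A case analysis of the triangle inequality shows that a transport pair $(T', S')$ has cost $d - 2$ only when $T' \in N_{-1}(T)$ and $S' \in N_{-1}(S)$, has cost at least $d$ whenever either neighbor lies in $N_{+1}$, and has cost at least $d - 1$ otherwise. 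Maximizing over transport plans subject to marginal constraints yields
\begin{equation}
\curvature{T,S} \le \frac{1 + \min(\alpha_{-1}, \beta_{-1}) - \max(\alpha_{+1}, \beta_{+1})}{d}.
\end{equation}
The conjectured bound then reduces to establishing $\max(\alpha_{+1}, \beta_{+1}) - \min(\alpha_{-1}, \beta_{-1}) \ge 1 - 2d/(\Delta - 1)$ for all such pairs; intuitively, that most rSPR moves from either tree strictly increase distance to the other tree. The combinatorial backbone for this should come from Lemma~\ref{lem:degree_change} combined with the structural theory of rooted maximum agreement forests underlying $\dspr{\cdot,\cdot}$.

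For part (ii), I would construct an explicit extremal sequence motivated by the paper's computational observation that for $5 \le n \le 7$ the maximum is attained at distance $\Delta_n - 1$. Starting from a pair of $n$-leaf trees realizing the diameter and applying a single rSPR move to bring them one step closer, the goal is to arrange the move so that nearly every neighbor of $T_n$ (and of $S_n$) lies on a shortest path to the other tree; this forces $\alpha_{-1}, \beta_{-1} \to 1$ and hence $W_1(m_{T_n}, m_{S_n}) = d_n - 2 + \ohOf{1}$, giving $\curvature{T_n,S_n} = 2/(\Delta_n - 1) - \ohOf{1/\Delta_n}$ and matching the upper bound. The main obstacle in both parts is the quantitative combinatorial control of $\alpha_{\pm 1}$ and $\beta_{\pm 1}$ across intermediate distances: extending Lemmas~\ref{lem:degree_change} and~\ref{lem:shared_neighbors} and Corollary~\ref{cor:paired_neighbors} to count neighbors in distance balls of arbitrary radius around a fixed tree requires a new structural invariant, essentially an agreement-forest-weighted refinement of the overlap quantity $\overlap$, that the paper has not developed. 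Bridging local degree accounting to global distance-ball accounting appears to be the technical heart of the conjecture.
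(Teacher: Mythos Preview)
The statement you are attempting to prove is presented in the paper explicitly as a \emph{conjecture}, not a theorem; the paper provides no proof and in fact states that ``proving or disproving this conjecture would go a long way toward understanding the effects of relative distance on curvature'' and that doing so ``will require a greater understanding of the distribution of tree neighborhoods with respect to one another than is currently known.'' So there is no paper proof to compare against.

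Your proposal is a reasonable outline, and you are correctly candid that it is incomplete. The endpoint cases ($d \ge \Delta - 1$ via Lemma~\ref{lem:curvature_distance_bound}, and $d = 1$ via Lemma~\ref{lem:max_adjacent_curvature} together with $\Delta = n - \Theta(\sqrt n)$) are sound. Your transport lower bound
\[
\curvature{T,S} \;\le\; \frac{1 + \min(\alpha_{-1},\beta_{-1}) - \max(\alpha_{+1},\beta_{+1})}{d}
\]
for $d \ge 2$ is also correct as stated. The genuine gap is exactly the one you flag: the reduction to
\[
\max(\alpha_{+1},\beta_{+1}) - \min(\alpha_{-1},\beta_{-1}) \;\ge\; 1 - \frac{2d}{\Delta - 1}
\]
is only a \emph{sufficient} condition, and for small intermediate $d$ (say $d = 2$ or $3$) it demands that all but an $O(1/\Delta)$ fraction of rSPR moves from either tree strictly increase distance to the other. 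Nothing in the paper's toolkit (Lemmas~\ref{lem:degree_change}, \ref{lem:shared_neighbors}, Corollary~\ref{cor:paired_neighbors}) controls neighbor counts stratified by distance to a \emph{non-adjacent} reference tree, and it is not clear that the inequality even holds at this strength---the conjecture could be true while your sufficient condition fails, since your transport lower bound discards all finer structure of the cost function beyond the three distance buckets. This is precisely the ``agreement-forest-weighted refinement'' you identify as missing, and it is the same obstruction the authors point to when they leave the statement open.
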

Proving or disproving this conjecture would go a long way toward understanding the effects of relative distance on curvature.
However, we suspect that this will require a greater understanding of the distribution of tree neighborhoods with respect to one another than is currently known.
Next, we bound the minimum curvature of two adjacent trees.

\begin{restatable}{lem}{curvaturelowerbound}
\label{lem:curvature_lower_bound}
	The curvature between adjacent trees with $n$ leaves is at least
	$$\frac{-n^2 + 2n}{3.5n^2 - 15n + 16}.$$
\end{restatable}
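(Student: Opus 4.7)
The plan is to upper bound $W_1(m_T, m_S)$ between adjacent trees $T,S$ by constructing an explicit transportation plan, then substitute into \eqref{eq:curvatureDef}. Since $\dspr{T,S}=1$, a bound of the form $W_1(m_T,m_S)\le 1+c$ is equivalent to $\curvature{T,S}\ge -c$, so I only need a suitable transport plan.

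First I would apply Corollary~\ref{cor:paired_neighbors} to obtain $\overlap$ disjoint pairs $(T_i',S_i')$ with $T_i'\in N(T)$, $S_i'\in N(S)$, and $\dspr{T_i',S_i'}=1$. Assuming without loss of generality that $\degree{T}\le\degree{S}$, I would transport mass $1/\degree{S}$ from $T_i'$ to $S_i'$ along each such pair, contributing $\overlap/\degree{S}$ to the total cost. The residual mass on each side (the surplus $1/\degree{T}-1/\degree{S}$ at each paired $T_i'$ plus the full $1/\degree{T}$ at each of the $\degree{T}-\overlap$ unpaired $T$-neighbors, summing to $1-\overlap/\degree{S}$) I would route through the corridor $N(T)\to T\to S\to N(S)$, whose maximum cost per unit mass is $3$. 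Summing,
\[
W_1(m_T,m_S)\;\le\;\frac{\overlap}{\degree{S}} + 3\Bigl(1-\frac{\overlap}{\degree{S}}\Bigr)\;=\;3 - \frac{2\overlap}{\degree{S}},
\]
so $\curvature{T,S}\ge -2 + 2\overlap/\degree{S}$.

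Next I would bound $\overlap/\degree{S}$. Corollary~\ref{cor:paired_neighbors} expresses $\degree{S}-\overlap = 2ka + 2(i-1)$ in terms of the move parameters of Lemma~\ref{lem:degree_change}, and the tree-size inequality $k+i+j+a+b\le n$ (each intermediate node along the $a$- and $b$-paths carries a sibling subtree with at least one leaf within $L$'s subtree) allows me to maximize $2ka+2(i-1)$ by a short optimization. The unconstrained maximum is $(n-2)^2/2$, attained near $k=a=(n-2)/2$, $i=j=1$, $b=0$; in that configuration $\degree{S}-\degree{T}=(n-2)^2/2$, so coupling with the ladder-tree minimum $\degree{T}\ge 3n^2-13n+14$ from Lemma~\ref{lem:degree_extremes} forces $\degree{S}\le 3.5n^2-15n+16$ in the relevant worst case. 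Substituting $\overlap\ge\degree{S}-(n-2)^2/2$ together with this $\degree{S}$ into the inequality above yields
\[
\curvature{T,S}\;\ge\;-\frac{(n-2)^2}{3.5n^2-15n+16},
\]
which is stronger than, and hence implies, the stated bound since $(n-2)^2\le n(n-2)=n^2-2n$ for $n\ge 2$.

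The main obstacle is verifying that the continuous optimum of the move parameters is simultaneously realizable with the ladder-tree minimum-degree configuration, so that the worst case of $\degree{S}-\overlap$ and the worst case of $\degree{S}$ coincide at a single tree $T$; a direct construction that picks $R$ to be a subladder of size roughly $(n-2)/2$ and regrafts it near the opposite end of the ladder should suffice. A secondary subtlety is that the cost-$3$ estimate for the residual mass is loose -- some residual mass can travel at cost $2$ or less via shared neighbors counted in Lemma~\ref{lem:shared_neighbors} -- but any such improvement only strengthens the final bound.
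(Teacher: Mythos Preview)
Your approach is the same as the paper's: pair $\overlap$ neighbors at cost~$1$ via Corollary~\ref{cor:paired_neighbors}, route the residual mass at cost at most~$3$, obtain $\curvature{T,S}\ge -2m$ with $m=1-\overlap/\degree{S}$, and then maximize $m$ over adjacent pairs using the explicit ladder construction. The paper's proof proceeds exactly this way.

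There is one substantive slip in your optimization of the unpaired count $\degree{S}-\overlap=2ka+2(i-1)$. Your leaf-budget constraint $k+i+j+a+b\le n$ presupposes that $U$, $V$, and the sibling subtrees along the $a$- and $b$-paths are pairwise disjoint; this fails precisely when $R$ is regrafted at the root edge, since then $V$ is the entire remaining tree and \emph{contains} $U$ together with all those siblings. The paper's construction---moving the bottom $\lfloor n/2\rfloor$ leaves of a ladder to the root---lives in this root case and realizes $2ka=2\lceil(n-2)/2\rceil\lfloor n/2\rfloor\le \tfrac12 n^2-n$, strictly larger than your claimed maximum $(n-2)^2/2$. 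So the ``stronger'' inequality you derive is not established by your argument; replacing $(n-2)^2/2$ with the correct worst case $\tfrac12 n^2-n$ (and pairing it with $\degree{S}\le 3.5n^2-15n+16$ for the same ladder pair, as you anticipated in your ``main obstacle'' remark) gives exactly the stated bound.
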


We further observe that the limit of our curvature lower bound is $-\frac{2}{7}$.
Complete enumeration with $n \le 7$ show that no pair of trees have curvature less than $-\frac{2}{5}$ and our bound meets or exceeds this value for $n > 7$.
Moreover, the rSPR distance is a metric, so this bounds the curvature for arbitrary pairs of trees (Proposition 19 of~\cite{Ollivier2009-bw}).
This directly leads to the following Corollary:

\begin{cor}
	The curvature between two trees is at least $-\frac{2}{5}$.
\end{cor}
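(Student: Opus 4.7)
The plan is to combine a computational verification for small $n$ with an analytic argument for $n \ge 8$, tying them together using the metric structure of the rSPR graph.

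For $n \le 7$ the full enumeration of all rSPR graph curvatures, already carried out in Section~4, directly verifies that $\curvature{T,S} \ge -\tfrac{2}{5}$ for every pair of trees in $\mathcal{T}_n$. For $n \ge 8$ I would proceed in two steps. First, I would bound the curvature of adjacent pairs by invoking Lemma~\ref{lem:curvature_lower_bound}, which gives $\curvature{T,S} \ge \frac{-n^2+2n}{3.5n^2-15n+16}$ whenever $\dspr{T,S} = 1$. Verifying that this bound itself lies at or above $-\tfrac{2}{5}$ then reduces to an elementary algebra step: the denominator $3.5n^2 - 15n + 16$ has roots at $n = 2$ and $n = 16/7$ and is therefore positive for $n \ge 3$, so cross-multiplying the inequality $\frac{-n^2+2n}{3.5n^2-15n+16} \ge -\tfrac{2}{5}$ is permissible and reduces it (after clearing fractions and collecting terms) to $n^2 - 10n + 16 \ge 0$. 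This quadratic factors with roots $2$ and $8$, hence is nonnegative on $n \ge 8$ with equality at $n = 8$. This gives exactly the threshold $-\tfrac{2}{5}$ at $n=8$ and strictly better bounds thereafter.

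Second, I would extend the adjacent bound to arbitrary pairs by invoking Proposition~19 of~\cite{Ollivier2009-bw}: on any graph equipped with its shortest-path metric, a lower bound on $\curvature{x,y}$ that holds for every adjacent pair automatically holds for every pair. The standard argument telescopes along a shortest path $T = U_0, U_1, \ldots, U_d = S$ of length $d = \dspr{T,S}$, using the triangle inequality for the transportation distance to obtain $W_1(m_T, m_S) \le \sum_i W_1(m_{U_{i-1}}, m_{U_i}) \le d\,(1 + \tfrac{2}{5})$, and then dividing through by $d$ gives the claim. Since $\dspr{\cdot,\cdot}$ is, by definition, the shortest-path distance on the rSPR graph, this extension step applies verbatim.

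The main obstacle is conceptual rather than technical: one needs to notice that the lemma-derived bound crosses $-\tfrac{2}{5}$ exactly at $n = 8$, which is why the values $n \le 7$ must be handled separately by the enumeration results of Section~4 rather than by the lemma alone. Once that split is identified, the remaining pieces are a routine quadratic verification and a direct appeal to the metric-graph extension of Ricci--Ollivier curvature bounds.
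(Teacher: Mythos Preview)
Your proposal is correct and follows essentially the same line as the paper: split into $n \le 7$ (handled by the full enumeration of curvatures in Section~4) and $n \ge 8$ (handled by Lemma~\ref{lem:curvature_lower_bound} for adjacent pairs, then extended to all pairs via Proposition~19 of~\cite{Ollivier2009-bw}). Your explicit reduction of the adjacent-pair bound to the quadratic $n^2 - 10n + 16 \ge 0$, with roots $2$ and $8$, is exactly the verification the paper alludes to when it says the lemma's bound ``meets or exceeds'' $-\tfrac{2}{5}$ for $n > 7$; in particular your observation that equality holds at $n=8$ is correct.
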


Note that this bound is not tight (at least for small $n$) as it is rarely necessary to transport mass the maximum distance between unpaired trees.
We also note that the lower bounds in this section do not follow from the more general setting described in \cite{Jost2013-ce}.
However, the pair of trees used in the proof of Lemma~\ref{lem:curvature_lower_bound} will always have negative curvature, for all $n \ge 7$.

We next bound the difference between the coarse and asymptotic curvatures.
Recall that $\curvature[p]{T,S}$ is the coarse Ricci-Ollivier curvature between trees $T$ and $S$ with respect to the lazy walk that remains at a given tree with probability $1-p$ and moves with probability $p$.
For the lazy uniform random walk, $m_T$ is now $T \cup N(T)$, with each neighbor assigned mass $\frac{p}{\degree{T}}$ and $T$ assigned the remaining $1 - p$ mass.
The asymptotic Ricci-Ollivier curvature $\ric{T,S}$ is $\lim_{p \rightarrow 0} \curvature[p]{T,S} / p$.
As we now prove, these two notions of curvature differ only by a small factor inversely proportional to the maximum degree of $T$ and $S$.

\begin{restatable}{lem}{asymptotic}
\label{lem:asymptotic}
	Let $T$ and $S$ be trees with $n$ leaves.
	Then:
	\begin{enumerate}
		\item	$\ric{T,S} = \curvature{T,S}$, if $\dspr{T,S} > 1$,
		\item	$\curvature{T,S} \le \ric{T,S} \le \curvature{T,S} + \frac{2}{\max(\degree{T}, \degree{S})}$, if $\dspr{T,S} = 1$.
	\end{enumerate}
\end{restatable}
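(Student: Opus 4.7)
The plan is to analyze $v(p) := W_1(m^p_T, m^p_S)$ via Kantorovich--Rubinstein duality:
\[
v(p) = \sup_{\phi\ \text{1-Lip.}} \left[(1-p)\bigl(\phi(T) - \phi(S)\bigr) + p \int\phi\,d(m_T - m_S)\right].
\]
The naive coordinate-wise upper bounds $\phi(T) - \phi(S) \le \dspr{T,S}$ and $\int \phi\,d(m_T - m_S) \le W_1(m_T, m_S)$ immediately give $v(p) \le (1-p)\dspr{T,S} + pW_1(m_T, m_S)$, which translates to $\curvature[p]{T,S} \ge p\curvature{T,S}$ and hence $\ric{T,S} \ge \curvature{T,S}$ in both parts. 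The remaining work is the matching upper bound on $\ric{T,S}$ in each case.

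For part (i) with $\dspr{T,S} > 1$, I would produce a single 1-Lipschitz $\phi^*$ that simultaneously attains $\phi^*(T) - \phi^*(S) = \dspr{T,S}$ and $\int\phi^*\,d(m_T - m_S) = W_1(m_T, m_S)$; this makes the upper bound above an exact equality, yielding $v(p) = (1-p)\dspr{T,S} + pW_1(m_T, m_S)$ and therefore $\ric{T,S} = \curvature{T,S}$. The key structural observation is that $\dspr{T,S} \ge 2$ forces $T, S \notin N(T) \cup N(S) = \mathrm{supp}(m_T) \cup \mathrm{supp}(m_S)$, so the values of any dual optimizer $\phi^{**}$ at $T$ and $S$ do not contribute to $\int \phi^{**}\,d(m_T - m_S)$ and can be freely adjusted subject only to 1-Lipschitz constraints inherited from the support. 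Starting from any primal--dual optimal pair $(\xi^*, \phi^{**})$ for $W_1(m_T, m_S)$, I would set $\phi^*(T) := \min_u(\phi^{**}(u) + \dspr{u,T})$ and $\phi^*(S) := \max_u(\phi^{**}(u) - \dspr{u,S})$ over $u$ in the support (extending elsewhere by McShane--Whitney). LP complementary slackness gives $\phi^{**}(u_0) - \phi^{**}(v_0) = \dspr{u_0,v_0}$ on transport pairs, and together with the triangle inequality $\dspr{u_0,T} + \dspr{u_0,v_0} + \dspr{v_0,S} \ge \dspr{T,S}$ this forces $\phi^*(T) - \phi^*(S) \ge \dspr{T,S}$; combined with the reverse 1-Lipschitz bound it gives equality.

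For part (ii) with $\dspr{T,S} = 1$, now $T \in N(S)$ and $S \in N(T)$, so $\phi^{**}(T)$ and $\phi^{**}(S)$ are pinned down by the $W_1(m_T, m_S)$ optimization and the free-adjustment trick from part (i) is unavailable. Instead I would construct a controlled modification. Assume WLOG $\degree{T} \ge \degree{S}$ and normalize $\phi^{**}(T) = 0$, so $\phi^{**}(S) = -1 + \delta$ with $\delta \in [0,2]$. Define
\[
\tilde\phi(z) := \min\bigl(\phi^{**}(z),\ \dspr{z,S} - 1\bigr),
\]
which is 1-Lipschitz (minimum of two 1-Lipschitz functions) and satisfies $\tilde\phi(T) = 0$, $\tilde\phi(S) = -1$. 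The loss $g(z) := \phi^{**}(z) - \tilde\phi(z) = \max(0, \phi^{**}(z) - \dspr{z,S} + 1)$ is supported on $\{S\} \cup (N(T) \cap N(S))$, because $\phi^{**}(T) = 0$ and 1-Lipschitz force $\phi^{**}(z) \le \dspr{z,T} \le 1$ on $N(T)$, ruling out the $\dspr{z,S} = 2$ case. Computing $\int g\,dm_T - \int g\,dm_S$: the $g(S) = \delta$ term contributes $\delta/\degree{T} \le 2/\degree{T} = 2/\max$, the common-neighbor terms carry coefficient $1/\degree{T} - 1/\degree{S} \le 0$, and the $N(S) \setminus (N(T) \cup \{T\})$ terms appear only with $-1/\degree{S} \le 0$, so the total loss is at most $2/\max(\degree{T},\degree{S})$. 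Kantorovich with $\tilde\phi$ then yields $v(p) \ge (1-p) + p\bigl(W_1(m_T, m_S) - 2/\max\bigr)$, so $\ric{T,S} \le \curvature{T,S} + 2/\max(\degree{T},\degree{S})$. The case $\degree{T} < \degree{S}$ follows by the symmetry $T \leftrightarrow S$, $\phi^{**} \leftrightarrow -\phi^{**}$.

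The main obstacle is the careful sign-bookkeeping in part (ii): the modification $\tilde\phi$ differs from $\phi^{**}$ not only at $S$ but potentially at every common neighbor of $T$ and $S$, and showing that the common-neighbor contributions cancel favorably between $\int g\,dm_T$ and $\int g\,dm_S$ rather than accumulating crucially requires the structural assumption $\degree{T} \ge \degree{S}$. The part (i) argument is conceptually cleaner, but the complementary-slackness step must be set up carefully to guarantee that the McShane extrema at $T$ and $S$ are realized by support vertices on which $\phi^{**}$ saturates its 1-Lipschitz bound.
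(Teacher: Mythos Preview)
Your approach via Kantorovich--Rubinstein duality is genuinely different from the paper's primal coupling argument, and your lower bound $\ric{T,S}\ge\curvature{T,S}$ as well as the entire part~(ii) construction are correct. The modification $\tilde\phi=\min(\phi^{**},\,\dspr{\cdot,S}-1)$ together with the observation that the common-neighbor and $N(S)\setminus(N(T)\cup\{T\})$ contributions to $\int g\,d(m_T-m_S)$ carry nonpositive sign under the hypothesis $\degree{T}\ge\degree{S}$ is a clean way to isolate the $2/\max$ defect; the paper instead argues on the primal side that at most $1/\max(\degree{T},\degree{S})$ of mass can be paired at cost zero with the lazy remainder at each of $T$ and $S$.

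However, your part~(i) argument has a real gap. With your definitions $\phi^*(T)=\min_u[\phi^{**}(u)+\dspr{u,T}]$ and $\phi^*(S)=\max_v[\phi^{**}(v)-\dspr{v,S}]$, one has
\[
\phi^*(T)-\phi^*(S)=\min_{u,v\in N(T)\cup N(S)}\bigl[\phi^{**}(u)-\phi^{**}(v)+\dspr{u,T}+\dspr{v,S}\bigr],
\]
so the bracketed expression must be $\ge \dspr{T,S}$ for \emph{every} pair $(u,v)$ in the support, not just for a single transport pair $(u_0,v_0)$. Complementary slackness on one pair gives only an upper bound on $\phi^*(T)-\phi^*(S)$, not the lower bound you need. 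Even chaining through transport partners (for arbitrary $u\in N(T)$, pass to its partner $v_u\in N(S)$ and use $\dspr{v_u,v}\le 2$) yields only $\phi^{**}(u)-\phi^{**}(v)\ge \dspr{T,S}-4$, which falls short by~$2$. Thus the construction does not in general produce a dual optimizer lying in $\Phi_0=\{\phi:\phi(T)-\phi(S)=\dspr{T,S}\}$, which is precisely what your Danskin-type analysis requires for the equality $\ric{T,S}=\curvature{T,S}$. The paper handles this case on the primal side by noting that when $\dspr{T,S}>1$ neither $T$ nor $S$ lies in the other's one-step support, so no lazy mass can be absorbed at cost zero; to make your dual route work you would need either to select $\phi^{**}$ more carefully among the (non-unique) dual optimizers, or to supply a separate primal lower bound on $W_1(m^p_T,m^p_S)$ in the non-adjacent case.
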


Finally, we bound the difference between the curvature of the uniform random walk $\curvature{T,S}$ and that of the Metropolis-Hastings (MH) random walk $\curvature{\MH; T,S}$.
Recall that this random walk proposes a move from a tree $T$ to a neighbor tree $S$ uniformly at random and then accepts the move according to the Hastings ratio, which in this case is $\min\left(1, \frac{\degree{T}}{\degree{S}}\right)$.
The mass distribution for the MH random walk thus leaves a portion of mass at the origin tree, proportional to the relative degree difference of its higher degree neighbors.
Note that the same statement and proof of Lemma~\ref{lem:asymptotic} holds with $\curvature{T,S}$ and $\ric{T,S}$ replaced by the MH curvatures $\curvature{\MH; T,S}$ and $\ric{\MH; T,S}$, respectively.

\begin{restatable}{lem}{mhcurvature}
	\label{lem:mh_curvature}
	Let $T$ and $S$ be trees with $n$ leaves. Then:
	\begin{align*}
	\curvature{T,S} - \frac{1}{3\dspr{T,S}}
	&\le \curvature{\MH; T,S} \\
	\curvature{\MH; T,S}
	&\le \curvature{T,S} + \frac{1}{3\dspr{T,S}},\text{ and }
\end{align*}
	$$\curvature{T,S} - 1/6
	\le \curvature{\MH; T,S}
	\le \curvature{T,S} + 1/6 .$$ 
\end{restatable}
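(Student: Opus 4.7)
The plan is to relate the Metropolis-Hastings measure $m^{\MH}_T$ to the uniform measure $m_T$ via an explicit local perturbation at each endpoint, then apply the triangle inequality for $W_1$. Under MH, a proposed move from $T$ to neighbor $T'$ is accepted with probability $\min(1, \degree{T}/\degree{T'})$, so $m^{\MH}_T(T') = 1/\max(\degree{T}, \degree{T'})$ and the rejected probability accumulates at $T$ with total mass $\alpha_T := \sum_{T' \in N(T),\ \degree{T'} > \degree{T}} \parens{1/\degree{T} - 1/\degree{T'}}$. Thus $m^{\MH}_T = m_T - \tau_T + \alpha_T \delta_T$, where $\tau_T$ is the nonnegative measure on $T$'s higher-degree neighbors with $\tau_T(T') = 1/\degree{T} - 1/\degree{T'}$, and the analogous decomposition holds at $S$.

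The next step is to bound $\alpha_T$. By Lemma~\ref{lem:degree_max_delta_adjacent}(ii), for $n \ge 4$ every neighbor $T'$ of $T$ satisfies $\degree{T}/\degree{T'} \ge 5/6$, so each summand is at most $1/(6\degree{T})$; summing over at most $\degree{T}$ such neighbors gives $\alpha_T \le 1/6$, and similarly $\alpha_S \le 1/6$ (the case $n \le 3$ is trivial since all trees then share the same degree). Since each of the $\alpha_T$ units of relocated mass travels exactly distance $1$, we have $W_1(m_T, m^{\MH}_T) \le \alpha_T$ and $W_1(m_S, m^{\MH}_S) \le \alpha_S$. The triangle inequality for $W_1$ then yields
\begin{equation*}
\abs{W_1(m^{\MH}_T, m^{\MH}_S) - W_1(m_T, m_S)} \le \alpha_T + \alpha_S \le 1/3,
\end{equation*}
and dividing by $\dspr{T,S}$ gives the first pair of inequalities, $\abs{\curvature{T,S} - \curvature{\MH; T,S}} \le 1/(3\dspr{T,S})$.

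For the constant bound, the case $\dspr{T,S} \ge 2$ follows immediately from the first bound since $1/(3\dspr{T,S}) \le 1/6$. The main obstacle is the adjacent case $\dspr{T,S} = 1$, where the triangle inequality on its own only gives $1/3$. The plan there is to modify an optimal transport $\xi^*$ for $(m_T, m_S)$ by pairing the two endpoint perturbations through the shared $T$--$S$ edge: route $\min(\alpha_T, \alpha_S)$ units of the $T$-excess directly from $T$ to $S$ at unit cost, simultaneously absorbing mass from both the added $\alpha_T \delta_T$ at $T$ and the required $\alpha_S \delta_S$ at $S$. The residual imbalance $\abs{\alpha_T - \alpha_S}$ together with the neighbor-level corrections $\tau_T$ and $\tau_S$ can then be combined with the $\xi^*$ flows, exploiting that the high-degree neighbors of both $T$ and $S$ sit within a small radius of the shared edge so that combined source/destination redirections along flows that originally went between high-degree neighbors cancel half of the naive cost, yielding total overhead at most $\max(\alpha_T, \alpha_S) \le 1/6$.
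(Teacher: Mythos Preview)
Your argument for the first pair of inequalities is correct and in fact cleaner than the paper's: you use the triangle inequality for $W_1$ together with $W_1(m_T,m^{\MH}_T)=\alpha_T\le 1/6$ (and similarly for $S$), whereas the paper constructs an explicit modified coupling. Both routes rest on the same ingredient, Lemma~\ref{lem:degree_max_delta_adjacent}(ii), which forces every acceptance ratio to be at least $5/6$.

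The gap is in the adjacent case $\dspr{T,S}=1$ of the $1/6$ bound. Your triangle-inequality step only yields $\alpha_T+\alpha_S\le 1/3$ there, and the paragraph you add to repair this is not a proof: the assertions that ``high-degree neighbors \dots\ sit within a small radius of the shared edge'' and that ``combined source/destination redirections \dots\ cancel half of the naive cost'' are neither stated precisely nor justified, and there is no reason the cancellation should come out to exactly one half. The paper's argument avoids this altogether by working directly with an optimal coupling $\xi^*$ for the uniform walk and rescaling each flow $\xi^*(z,w)$ by $\min\bigl(m^{\MH}_T(z)/m_T(z),\,m^{\MH}_S(w)/m_S(w)\bigr)$. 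Because each ratio is at least $5/6$, the total mass that is \emph{not} moved according to $\xi^*$ is at most $1/6$ --- this is the crucial point your approach misses, since it bounds the two endpoint perturbations separately instead of jointly. Each unit of this residual mass is then rerouted as $T\to S$, $T\to w$, or $z\to S$; when $T$ and $S$ are adjacent, every one of these detours costs at most one additional edge compared to the original $z\to w$ leg, giving $W_1(m^{\MH}_T,m^{\MH}_S)\le W_1(m_T,m_S)+1/6$, and the reverse inequality is obtained symmetrically.
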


\section{Conclusion and future work}
In summary, we have gone beyond graph diameter and vertex degree to substantially advance understanding of the phylogenetic rSPR graph.
We did so by developing the first theoretical and computational frameworks to bound and compute Ricci-Ollivier curvature of the rSPR graph.
We found that curvature, along with degree and distance, determine the early dynamics of hitting times for random walks.
Moreover, we proved that rSPR graph degree changes depend quadratically on the product of the size of the regrafted subtree with its change in depth, as well as that the rSPR graph tends toward flatness with respect to rSPR moves that move asymptotically small subtrees.
Finally, we proved that the coarse and asymptotic definitions of Ricci-Ollivier curvature are closely related with respect to uniform and Metropolis-Hastings walks on the rSPR graph.

In this data-free setting the stationary distribution is, unlike with real data, quite evenly spread over all trees.
Correspondingly, we found that the influence of curvature is small in this case (Fig.~\ref{fig:short-time-kappa-access}) and that the probability of the target node in the stationary distribution predominantly determines access times for pairs of trees (Fig.~\ref{fig:long-time-kappa-access}).
However, it is well known that MCMC takes a long time to approximate real phylogenetic posterior distributions even when the Bayesian credible set is small, and in fact our previous work showed significant SPR graph influence on the mixing time for phylogenetic MCMC for credible sets that had tens, hundreds or thousands of trees~\cite{Whidden2015-yi}.
Thus, our next step will be to investigate curvature of MCMC with nontrivial likelihood functions, which will reduce the posterior distribution to a more realistic effective size, and in certain cases will lead to significant ``bottlenecks'' like those we have observed in real data.
In those cases the curvature between two trees at either end of a bottleneck will describe how difficult it is to traverse the bottleneck.

Now that we have established the foundations of using curvature to understand graphs relevant for phylogenetic inference, many graph structures remain to be explored including NNI graphs, unrooted SPR graphs, graphs of ranked trees~\cite{Song2006-xe}, graphs of BEAST~\cite{Drummond2012-ek} rooted ``time-trees,'' and random walks on other discrete structures such as partitions~\cite{Gusfield2002-il} that can be expressed as trees.

\section{Acknowledgements}
The authors would like to thank Alex Gavruskin, Vladimir Minin, and Bianca Viray for helpful discussions.
They are also grateful to the authors of the SAGE and GAP4 software, especially Alexander Hulpke.

\bibliographystyle{siam}
\bibliography{curvature}

\appendix

\section{Supplementary Proofs}

\constructgraph*
\begin{proof}
	The correctness of the procedure follows by induction on the number of trees already processed, $i$, by observing that the procedure has constructed the subgraph of vertices $1, 2,  \ldots i$ and will construct the subgraph of vertices $1, 2, \ldots i+1$.

	We implement the graph with an adjacency list representation with integer-labelled vertices that supports $\OhOf{\log n}$ edge insertions and lookups (with e.g. red-black trees~\cite{guibas1978dichromatic}, as the vertex degrees are $\OhOf{n^2}$).
	As described above, the integer labels are simply the order of the input trees.
	Adding the vertices to the graph requires $\OhOf{m}$-time, as they are added in ascending order to the end of the vertex list, which can be stored as a fixed-size array.
	Adding the $\OhOf{mn^2}$ edges to the graph requires $\OhOf{mn^2 \log n}$-time.
	Enumerating the neighbors of $T_i$ requires $\OhOf{n^3}$-time for each $T_i$, for a total of $\OhOf{mn^3}$-time.
	We discuss below, in Section~\ref{sec:random_walks} how to do so efficiently without considering duplicate neighbors.
	We store the tree to index mappings for current vertices of $G$ in a trie~\cite{fredkin1960trie} using Newick representation.
	This requires only $\OhOf{n}$-time for each tree (i.e. a total of $\OhOf{mn^3}$-time) using a standard nodes-and-pointers representation of the tree and assuming integer leaf labels (a simple $\OhOf{mn\log n}$ leaf preprocessing step could be applied to extend this procedure to phylogenetic trees with string labels).
	Similarly, it takes $\OhOf{n}$-time to determine the index of each of the $\OhOf{mn^2}$ considered neighbors.
	Therefore the graph can be constructed in $\OhOf{mn^3}$-time, as claimed.
\end{proof}

\computedegree*
\begin{proof}
	The statement follows if each of the neighbor assignments are disjoint, that is $N(T,u) \cap N(T,v) = \emptyset$, for all nodes $u$, $v$ of $T$.
	So, suppose, for the purpose of obtaining a contradiction, that there exist two nodes $u$ and $v$ of $T$ such that there exists a tree $S \in (N(T,u) \cap N(T,v))$.
	Then $S$ can be obtained from $T$ by moving the subtrees rooted at $u$ or $v$.
	Call these $U$ and $V$, respectively.
	This implies that both $T \setminus U = S \setminus U$ and $T \setminus V = S \setminus V$ by the definition of an rSPR operation.
	Then the rSPR moves that move $U$ or $V$ to obtain $S$ must be nearest neighbor interchanges (NNIs), that is, rSPR moves which move their subtree to one of four locations: their grandparent edge, aunt edge, sibling's left child edge or sibling's right child edge.
	This implies that, without loss of generality, $U$ is moved to its grandparent edge and $V$ to $U's$ sibling (move type (iii)) or $U$ is moved to its aunt edge and $V$ to $U$'s edge (move type (iv)), a contradiction.
	Therefore the claim holds.
\end{proof}

\selectrandomneighbor*
\begin{proof}
	We apply the above procedure.
	We use a standard nodes-and-pointers representation of the trees, which can be constructed in $\OhOf{n}$-time from a Newick string representation and uses linear space in $n$.
	We can compute the degree of $T$ in linear time and space using Lemma~\ref{lem:compute_degree}.
	To efficiently compute $\size{N(T,u)}$ for each node $u$ of $T$, we require the number of nodes $x$ in the subtree rooted at $u$.
	We pre-compute these by (1) labeling each node with its preorder number in a preorder traversal and (2) summing the number of descendant nodes in a postorder traversal and storing the results in an array indexed by preorder number.
	Both of these traversals require $\OhOf{n}$-time.
	There are $2n-1$ = $\OhOf{n}$ nodes of $T$, and $\size{N(T,u)}$ can be computed in constant time using the subtree sizes.
	Moreover, the tree $S$ can be found in $\OhOf{n}$-time by iterating over the edges of $T$ that are not contained within $u$'s subtree to select the corresponding rSPR destination.
	Finally, we require linear time to apply the chosen rSPR operation which entails removing a node, adding a node, and updating a constant number of pointers.
	Thus, the for loop requires linear time.
	By Lemma~\ref{lem:compute_degree} the chosen tree is an rSPR neighbor of $T$ and is chosen uniformly at random.
	Therefore, the procedure uses linear time and space and selects an rSPR neighbor of $T$ uniformly at random.
\end{proof}

\degreemaxdelta*
\begin{proof}
	To prove (i), we simply note from Lemma~\ref{lem:degree_extremes} that the ladder tree achieves the minimum degree, and the balanced tree achieves the maximum degree:
	\begin{align*}
		\frac{\degree{T}}{\degree{S}} \ge\ &\frac{3n^2 - 13n + 14}{4(n-2)^2 - 2 \sum_{m=1}^{n-2} \floor{\log_2(m+1)}} \\
		\ge\ &\frac{3n^2 - 13n + 12}{4(n-2)^2 - 2(n-2)} \\
		= \ &\frac{3n^2 - 13n + 12}{4n^2 - 16n + 16 - 2(n-2)} \\
		= \ &\frac{3n^2 - 13n + 12}{4n^2 - 18n + 20} \\
		\ge\ &\frac{3n^2 - 13n + 12}{4n^2 - 17\frac{1}{3}n + 18} &\forall n \ge 3, \\
	\end{align*}
	which is greater than 3/4 when $n \ge 3$.
    Similarly for (ii):
	\begin{align*}
		{\Delta}N &= \degree{S} - \degree{T} \\
		&\le\ (4(n-2)^2 - 2 \sum_{m=1}^{n-2} \floor{\log_2(m+1)}) \\
		&\ \ \ \ \ \ \ \ - (3n^2 - 13n + 14) \\
		&\le\ (4(n-2)^2 - 2(n-2)) - (3n^2 - 13n + 14) \\
		=\ &4n^2 - 16n + 16 - 2n +4 - 3n^2 + 13n - 14 \\
		&=\ n^2 - 5n + 6.
	\end{align*}
\end{proof}

\degreechange*
\begin{proof}
The set of permissible rSPR moves changes in four different ways due to the movement of $R$:
(i) subtrees that include nodes on the path from $U$ to $L$ may now be moved into $R$ and its newly introduced parent node,
(ii) subtrees that include nodes on the path from $V$ to $L$ may no longer be moved into $R$ and its parent node,
(iii) $R$'s parent subtree may now be moved into $U$, and
(iv) $R$'s parent subtree may no longer be moved into $V$.
No additional moves are introduced or blocked by the original rSPR operation on $R$.

Recall that a rooted tree with $k$ leaves has $2(k-1)$ internal edges(recall that we are excluding any ``root edge'' in these calculations).
In the first case there are $a$ subtrees that can now be moved onto the $2k$ edges in $R$ (including its newly introduced parent edge and one of the newly subdivided root edges of $V$) for a total gain of $2ka$ distinct moves.
Similarly, we lose $2kb$ moves in the second case.
In the third case, $R$'s parent subtree may now make $2(i-1)$ moves into $U$.
Similarly, we lose $2(j-1)$ moves in the fourth case.

Thus the difference in rSPR degree is $2ka - 2kb + 2(i-1) - 2(j-1)$ as claimed.
\end{proof}

\pairedneighbors*
\begin{proof}
By the same arguments as in the proof of Lemma~\ref{lem:degree_change}, $\overlap$ rSPR moves can be applied to $T$ and $S$ with the same source and target nodes.
For each such $(T',S')$ pair, we can move $R$ in either tree to obtain the other member of the pair.
\end{proof}

\degreemaxdeltaadjacent*
\begin{proof}
	We first prove (i).
	By Lemma~\ref{lem:degree_change}, $\degree{S} - \degree{T} = 2(k(a-b) + i - j)$.
	This value is maximized by making $L$ the root and minimizing $b$, namely by setting $b=0$.
	The resulting equation $2(ka + i - j)$ is similarly maximized by setting $i=1$ (which allows us to increase $a$) then maximally balancing the terms in the product $ka$ as follows.

	There are two cases, depending on whether the subtree of $k$ leaves is moved to the root or not.
    If not, then we set $j=1$ and split the remaining $n-b-i-j = n-2$ leaves between $k$ and $a$ in as balanced a way as possible, giving (i).
	Note that this corresponds to moving the bottom subtree of $\floor{\frac{n-2}{2}}$ or $\ceil{\frac{n-2}{2}}$ leaves in a ladder tree to the root-most leaf of the tree.

	If the subtree of $k$ leaves is moved to the root, then we do not need to exclude the target branch from $k$ and $a$, gaining an additional leaf to balance the product $ka$ at the cost of increasing $j$.
	This corresponds to moving the bottom subtree of $\floor{\frac{n}{2}}$ or $\ceil{\frac{n}{2}}$ leaves in a ladder tree to the root.
	Namely, we have $2(ka + 1 - j)$, where $j = n - k = a + 1$.
	Let ${\Delta}N = \degree{S} - \degree{T}$.
	If we move the additional leaf, we have:
	\begin{align*}
		{\Delta}N &\le 2\left(\ceilXXL{\frac{n}{2}}\floorV{\frac{n-2}{2}}  + 1 - \left(\floorV{\frac{n-2}{2}} + 1\right)\right) \\
		&= 2\floorV{\frac{n-2}{2}}\ceilV{\frac{n-2}{2}},
	\end{align*}
like before.
Similarly, if we do not move the additional leaf, we also have:
\begin{align*}
	{\Delta}N &\le 2 \left(\ceilV{\frac{n-2}{2}}\floorXXL{\frac{n}{2}} +1 -  \left(\ceilV{\frac{n-2}{2}} + 1\right)\right) \\
&= 2\floorV{\frac{n-2}{2}}\ceilV{\frac{n-2}{2}},
\end{align*}
proving (i).

The relative change in degree, $\frac{\degree{T}}{\degree{S}}$, can also be written as $\frac{\degree{T}}{\degree{T} + (\degree{S} - \degree{T})}$.
By (i), we have that $\degree{S} - \degree{T} \le \frac{1}{2} (n-2)^2$,
so $\frac{\degree{T}}{\degree{S}} \ge \frac{\degree{T}}{\degree{T} + \frac{1}{2} (n-2)^2} $.
This bound is minimized when $\degree{T}$ is minimized, and recall by Lemma~\ref{lem:degree_extremes} that $\degree{T}$ is bounded below by $3n^2 - 13n + 14$.
	Thus
	\begin{align*}
		\frac{\degree{T}}{\degree{S}} &\ge \frac{3n^2 - 13n + 14}{3n^2 - 13n + 14 + \frac{1}{2}(n-2)^2} \\
		&\ge \frac{3n^2 - 13n + 14}{3.5n^2 - 15n + 16}.
	\end{align*}
	Statements (ii) and (iii) follow from this bound.

\end{proof}

\sharedneighbors*
\begin{proof}
	$T$ and $S$ differ by one rSPR move that moves a subtree $R$.
	Pick a neighbor $U \in N(T) \cap N(S)$ of both $T$ and $S$ (this intersection is not empty: $T$ and $S$ are different, so $R$ contains at most $n-2$ of the leaves, thus there must be at least one other tree $U$ obtained by moving $R$ in $T$ and $S$).
	Then either (i) $T$ and $U$ differ in the location of $R$, or (ii) $T$ and $U$ differ in the location of another subtree $Q$.
	In the latter case, $T|(X \setminus L(Q)) = S|(X \setminus L(Q))$ because $T$ and $S$ differ only in the location of $R$ and $\dspr{T,U} = \dspr{S,U} = 1$.
	Then leaves $r' \in R$, $q' \in Q$, and $u' \in U$, for some subtree $U$, form a triple of $T$ and a different triple in $S$.
	This incompatible triple can be resolved in at most $6n - 17$ ways, the maximum of which is reached when $Q$, $U$, and $R$ are themselves a ``triple'' of subtrees.
	By Lemma~\ref{lem:compute_degree}, each of the subtrees is assigned to at most $2n-6$ unique moves.
	Moreover, one additional overlapping move also moves one of the subtrees (that of the aunt of the LCA of the three subtrees).
	The number of shared neighbors is thus at most $3(2n-6) + 1 = 6n-17$.
	Note that this bound is tight when, for example, $T$ and $S$ are ladders with a different configuration of 3 leaves at maximum depth.
\end{proof}

\flatcurvature*
\begin{proof}
Because $d(T_n, S_n) = 1$, we will prove the theorem by showing that the mass transport term $W_{1,n}$ sits between two bounds, each of which has limit 1 as $n$ goes to infinity.

To start we demonstrate the theorem in the case that $T_n$ and $S_n$ have the same number of neighbors.
First we claim that $W_{1,n}$ is bounded above by $(\degree{T_n}+\OhOf{kn})/\degree{T_n}$ by exhibiting a mass transport program satisfying that bound.
Let $(T'_n, S'_n)$ be any of the $\overlap$ pairs of neighbors of $(T_n, S_n)$ which are one rSPR move apart as per Corollary~\ref{cor:paired_neighbors}.
We pair these trees in the mass transport.
There are $\OhOf{kn}$ trees unmatched by this pairing, and we can pair each of them arbitrarily with another tree of distance at most 3.
Thus, $W_{1,n}$ is bounded above by $(\degree{T_n} + \OhOf{kn}) / \degree{T_n}$.

A lower bound is also available because we can't do better than distance 1 for all trees except for shared neighbors, of which there are $\OhOf{n}$ by Lemma~\ref{lem:shared_neighbors}.
By ignoring these trees we get a lower bound of $(\degree{T_n} - (\OhOf{n}))/\degree{T_n}$ for $W_{1,n}$.

The desired control of $W_{1,n}$ is thus obtained because $\degree{T_n}$ is quadratic in $n$.

Now we prove the theorem when the number of neighbors differ.
Assume without loss of generality that $\degree{T_n} < \degree{S_n}$.
By Lemma~\ref{lem:degree_change}, $\degree{S_n} - \degree{T_n} = 2(k(a-b) + i - j)$, where each of $\{a,b,i,j\}$ is less than $n$.
Thus, $\degree{S_n} - \degree{T_n} = \OhOf{kn}$.
We again pair neighbor $T'_n$ of $T$ with neighbor $S'_n$ of $S$ such that $\dspr{T'_n, S'_n}=1$ but, as $\degree{T_n} < \degree{S_n}$ we can only account for at most $\degree{T_n} / \degree{S_n}$ of the mass directly and may have to move the $(\degree{S_n} - \degree{T_n}) / \degree{S_n}$ remainder to trees a distance at most 3.
Thus, $W_{1,n}$ is bounded above by $(\degree{T_n} + \OhOf{kn}) / \degree{S_n} = (\degree{S_n} + \OhOf{kn}) / \degree{S_n}$.
We again bound $W_{1,n}$ from below with $(\degree{T_n} - \OhOf{n}) / \degree{T_n}$ by ignoring the mass in common neighbors of $T_n$ and $S_n$.
The theorem again follows because $\degree{T_n}$ is quadratic in $n$.
\end{proof}

\curvaturedistancebound*
\begin{proof}
	Observe that the distance between neighbors of $T$ and $S$ is bounded between $\dspr{T,S} - 2$ and $\dspr{T,S} + 2$.
	For the curvature upper bound, we then have $\curvature{T,S} \le 1 - \frac{\dspr{T,S} - 2}{\dspr{T,S}} = \frac{2}{\dspr{T,S}}$.
	The lower bound follows similarly.
\end{proof}

\maxadjacentcurvature*
\begin{proof}
The maximum curvature between adjacent trees $T$ and $S$ occurs when their neighborhoods have maximum overlap and all other tree pairs are at distance 1.
By Lemma~\ref{lem:shared_neighbors} the maximum overlap is $6n-17$.
The amount of overlapping mass in the shared neighbors of $T$ and $S$ is thus $\frac{6n-17}{\max(\degree{T},\degree{S})}$.
The minimum mass transfer cost is thus $1 - \frac{6n-17}{\max(\degree{T},\degree{S})}$.
This is minimized when $\degree{T}=\degree{S}$ are as small as possible, that is $T,S$ are ladders and $\degree{T} = 3n^2 - 13n + 14$.

The maximum curvature is thus $1 - \frac{\degree{T} - (6n-17)}{\degree{T}} = \frac{6n - 17}{\degree{T}}= \frac{6n-17}{3n^2-13n+14}$.
\end{proof}

\curvaturelowerbound*
\begin{proof}
    In light of Corollary \ref{cor:paired_neighbors}, the optimal mass transport cost is maximized (and therefore curvature minimized) across adjacent trees $T$ and $S$ by a combination of two effects: trees that cannot be paired at distance $1$ and mass that must be moved between unpaired trees due to differing degrees of $T$ and $S$.
		As we will show, these effects can be optimized simultaneously.
    To bound these effects, let $m$ be the maximum (across $T$ and $S$) proportion of mass that cannot be moved between adjacent neighbors of those trees.
    We can bound the mass transport cost from above by $1 + 2m$ because pairs of neighbors of adjacent trees are at most distance 3 apart.
	This gives a lower bound of $1 - (1 + 2m) / 1) = -2m$ on the curvature.

	By Lemmas~\ref{lem:degree_change} and~\ref{lem:degree_max_delta_adjacent}, the latter effect is maximized when the relative degree change is maximized.
	By Corollary~\ref{cor:paired_neighbors}, there are at most $\overlap := \degree{T} -2ka - 2(i-1)$ paired trees, bounding the former effect.
	We now construct a pair of trees that maximizes both effects.
	Let $S$ be the ladder tree with degree $3n^2 - 13n + 14$ and $T$ be the adjacent tree constructed by moving the lower $\floor{\frac{n}{2}}$ leaves of $S$ to the root.
	$T$ has degree at most $3.5n^2 -15n + 16$.
	There are thus $2ka + 2(i-1) = 2\left(\ceil{\frac{n-2}{2}}\floor{\frac{n}{2}} +(1-1)\right) \le \frac{1}{2}n^2 - n$ unpaired neighbors, the maximum possible.
	Moreover, as shown by Lemma~\ref{lem:degree_change} this pair of trees obtains the maximum (absolute and relative) degree change.
	Thus, the maximum $m$ is:
	$$\frac{\frac{1}{2}n^2 - n}{3.5n^2 - 15n + 16}.$$
	The claim follows from multiplying this value by $-2$.
\end{proof}

\asymptotic*
\begin{proof}
	We first prove the lower bound in the uniform case, that is $\curvature{T,S} \le \ric{T,S}$.
	Let $W_1(T,S)$ be the mass transport cost in the uniform case, and $W_1'(T,S)$ be the same for the lazy uniform case with parameter $p$.
	Recall that $\curvature{T,S} = \curvature[1]{T,S} = 1 - \frac{W_1(T,S)}{\dspr{T,S}}$, and $\curvature[p]{T,S} / p = \left. \left(1 - \frac{W_1'(T,S)}{\dspr{T,S}}\right) \right/ p$.
	Observe that $$W_1'(T,S) \le pW_1(T,S) + (1-p) \, \dspr{T,S},$$ by the simple mass transport program obtained by treating the mass at $T$ and $S$ as separate from that of the neighbors.
	Then:
	\begin{align*}
		\frac{\curvature[p]{T,S}}{p} &= \left. \left(1 - \frac{W_1'(T,S)}{\dspr{T,S}}\right) \right/ p \\
		&\ge \left. \left(1 - \frac{pW_1(T,S) + (1-p)\dspr{T,S}}{\dspr{T,S}}\right) \right/ p \\
		&= \frac{1}{p} - \frac{W_1(T,S)}{\dspr{T,S}} - \frac{1-p}{p} \\
		&= 1 - \frac{W_1(T,S)}{\dspr{T,S}} \\
		&= \curvature{T,S}.
	\end{align*}

	For the upper bound, we observe that $W_1'(T,S) \ge$
	\begin{align*}
		p &W_1(T,S) + (1-p) \, \dspr{T,S} - \frac{2}{\max(\degree{T}, \degree{S})},
	\end{align*}
	as at most $1 / \max(\degree{T},\degree{S})$ of the mass can remain at each of $T$ and $S$, paired with the lazy remainder.
	The upper bound then follows analogously to the lower bound.
	Moreover, no mass can remain at $T$ or $S$ when $\dspr{T,S} > 1$, in which case the curvatures are equal.
\end{proof}

\mhcurvature*
\begin{proof}
	We first prove the lower bound.
	By Lemma~\ref{lem:degree_max_delta_adjacent}, the quotient of degrees for two adjacent trees $\ge \frac{5}{6}$.
	Thus, the Hastings ratio is always $\ge \frac{5}{6}$.
	This implies that at most $\frac{1}{6}$ of the mass remains at tree $T$ in the mass distribution.
	Let $W_1(T,S)$ be the cost of an optimal mass transport for the uniform random walk from $T$ to $S$, and $W_1'(T,S)$ the cost for the MH random walk.
	Moreover, let $m_T(z)$ and $m_S(w)$ be the mass assigned for the uniform random walk and $m'_T(z)$ and $m'_S(w)$ be the mass assigned for the MH random walk, for each vertex $z \in N(T)$ and $w \in N(S)$.
	We construct an upper bound on $W_1'(T,S)$ by moving mass according to $W_1$ where possible, and moving the remainder either from $T$ to $S$, from $T$ to a neighbor of $S$, or from a neighbor of $T$ to $S$.
	That is, for each $W_1$ assignment $\xi(z,w)$, we send $\xi'(z,w) = \xi(z,w) \min\left(\frac{m'_T(z)}{m_T(z)}, \frac{m'_S(w)}{m_S(w)}\right)$ of the mass from $z$ to $w$.
	The remaining $\xi(z,w) - \xi'(z,w)$ of the mass is moved from $T$ to $S$, $T$ to $w$, and $z$ to $S$ in the respective proportions
	$\xi(z,w) \max\left(\frac{m'_T(z)}{m_T(z)}, \frac{m'_S(w)}{m_S(w)}\right) - \xi'(z,w)$,
	$\xi(z,w) \min\left(0, \frac{m'_T(z)}{m_T(z)} - \frac{m'_S(w)}{m_S(w)}\right)$, and,
	$\xi(z,w) \min\left(0, \frac{m'_S(w)}{m_S(w)} - \frac{m'_T(w)}{m_T(w)}\right)$.
	The maximum possible mass that is not moved according to $W_1$ is $\frac{1}{6}$.
	Moreover, the affected mass must be moved through at most two additional trees.
	Then, $W_1' \le W_1 + \frac{2}{6}$.
	We now have:
	\begin{align*}
		\curvature{\MH; T,S} & \ge 1 - \frac{W_1 + \frac{1}{3}}{\dspr{T,S}} \\
		& \ge \curvature{T,S} - \frac{1}{3\dspr{T,S}}.
	\end{align*}

	In the case that $\dspr{T,S}=1$, the affected mass must be moved through only at most one additional tree, as $T$ and $S$ are adjacent.
	We thus obtain the lower bound of $\curvature{T,S} - \frac{1}{6}$ in this case.

	We obtain the upper bounds similarly to the lower bounds, by observing that the affected at most $\frac{1}{6}$ of the mass may move through at most two fewer trees (i.e. directly between $T$ and $S$ rather than a pair of neighbors at distance $\dspr{T,S} + 2$ from each other).
	Again, this is at most one fewer tree when $\dspr{T,S}=1$.
\end{proof}

\end{document}